 \newtheorem{thm}{Theorem}[section]
 \theoremstyle{definition}
 \theoremstyle{remark}
 \newtheorem*{ex}{Example}
 \numberwithin{equation}{section}
\def\cl{{C}\!\ell}
\def\Cen{{\rm Cen}}
\def\diag{{\rm diag}}
\def\Mat{{\rm Mat}}
\def\GL{{\rm GL}}
\def\U{{\rm U}}
\def\Or{{\rm O}}
\def\R{{\mathbb R}}
\def\C{{\mathbb C}}
\def\Tr{{\rm Tr}}
\def\G{{\rm G}}
\def\T{{\rm T}}
\def\M{{\rm M}}
\begin{document}

%
%
%
%
%
%
%
%
%

\title[Covariantly constant solutions of the Yang-Mills equations]
 {Covariantly constant solutions \\ of the Yang-Mills equations}

\author[Dmitry Shirokov]{Dmitry Shirokov}

\address{%
National Research University Higher School of Economics\\
Myasnitskaya str. 20\\
101000 Moscow\\
Russia}

\email{dshirokov@hse.ru}

\thanks{The reported study was funded by RFBR according to the research project
No. 16-31-00347 mol\_a.
}

\address{ Institute for Information Transmission Problems of Russian Academy of Sciences\\
Bolshoy Karetny per. 19 \\
127051 Moscow \\
Russia}

\email{shirokov@iitp.ru}
\subjclass{Primary 70S15; Secondary 15A66}

\keywords{Yang-Mills equations, Clifford algebra, covariantly constant solutions, spin connection, algebra of h-forms, Atiyah-K\"ahler algebra}

\date{September 22, 2017}



\begin{abstract}
We present a new class of covariantly constant solutions of the Yang-Mills equations. These solutions correspond to the solution of the field equation for the spin connection of the general form.
\end{abstract}

\maketitle

\section*{Introduction}

This paper is based on the 40-min talk given by the author at the International Conference on Clifford Algebras and Their Applications in Mathematical Physics (ICCA 11) in the mini-symposium ``Clifford Algebra and the Fundamental Forces of Nature'' (Ghent, August 2017).

We present a new class of covariantly constant solutions of the Yang-Mills equations. These solutions correspond to the solution of the field equation for the spin connection of the general form \cite{ofe}.

In the present paper, we generalize results of the papers \cite{marchuk1} and \cite{ofe}. Namely, Theorems \ref{th5} and \ref{th6} generalize the main results of the paper \cite{ofe} on the spin connection of the general form. Theorems \ref{CCS}, \ref{CCS2}, and \ref{CCS3} generalize the main results of the paper \cite{marchuk1} on a new class of solutions of the Yang-Mills equations.

\section{Tensor fields with values in Clifford algebra and field equation for spin connection}\label{sec1}

Let us consider pseudo-Euclidean space $\R^{k,l}$ of the dimension $\dim \R^{k,l}=k+l=m$ with Cartesian coordinates $x^\mu$, $\mu=1, \ldots, m$. The metric tensor of $\R^{k,l}$ is given by a diagonal matrix
\begin{eqnarray}
\rho=||\rho^{\mu\mu}||=\diag(\underbrace{1, \ldots, 1}_{k}, \underbrace{-1, \ldots, -1}_{l}).\label{rho}
\end{eqnarray}
By $\partial_\mu:=\frac{\partial}{\partial x^\mu}$ denote the partial derivatives.

Let us consider the real Clifford algebra $\cl_{p,q}$, $p+q=n$ \cite{Lounesto}, \cite{Benn:Tucker}, with the generators $e^a$, $a=1, \ldots, n$, which satisfy the following conditions
\begin{eqnarray}
e^a e^b+e^b e^a=2\eta^{ab}e,\qquad \eta=||\eta^{ab}||=\diag(\underbrace{1, \ldots, 1}_{p}, \underbrace{-1, \ldots, -1}_{q}),\label{eta}
\end{eqnarray}
and the basis
$$\{e^A\}=\{e, e^a, e^{ab}, \ldots, e^{1\ldots n}\},$$
where $A$ is an arbitrary ordered multi-index of a length $|A|$ between $0$ and $n$.

We use notation $\cl_{p,q}\T^r_s$ for the set of tensor fields with values in the Clifford algebra
$U^{\phi_1 \ldots \phi_r}_{\psi_1 \ldots \psi_s}=U^{\phi_1 \ldots \phi_r}_{\psi_1 \ldots \psi_s}(x): \R^{k,l}\to\cl_{p,q}$:
$$U^\Phi_\Psi (x)=U^{\phi_1 \ldots \phi_r}_{\psi_1 \ldots \psi_s}(x)=u^{\phi_1 \ldots \phi_r}_{\psi_1 \ldots \psi_s}(x)e+u^{\phi_1 \ldots \phi_r}_{\psi_1 \ldots \psi_s a}(x)e^a +u^{\phi_1 \ldots \phi_r}_{\psi_1 \ldots \psi_s ab}(x)e^{ab}+\cdots$$
$$\cdots+u^{\phi_1 \ldots \phi_r}_{\psi_1 \ldots \psi_s 1\ldots n}(x)e^{1\ldots n}=u^\Phi_{\Psi A}(x)e^A\in\cl_{p,q}\T^r_s,\qquad u^\Phi_{\Psi A}:\R^{k,l}\to\R.$$
We denote multi-index $\phi_1 \ldots \phi_r$ by $\Phi$, multi-index $\psi_1 \ldots \psi_s$ by $\Psi$ , and their lengths by $|\Phi|=r$, $|\Psi|=s$.

We can raise and lower Greek indices using matrix $\rho=||\rho^{\mu\nu}||$ and raise and lower Latin indices using matrix $\eta=||\eta^{ab}||$. We have $e_a=\eta_{ab}e^b=(e^a)^{-1}$ and $e_A=(e^A)^{-1}$.

As a particular case, we consider the algebra $\cl_{p,q}\T$ of smooth functions with values in the Clifford algebra $U=U(x):\R^{k,l}\to\cl_{p,q}$:
$$U(x)=u(x)e+u_a(x) e^a+\cdots+u_{1\ldots n}(x)e^{1\ldots n}=u_A(x) e^A,\quad u_A: \R^{k,l}\to \R.$$

Let us consider a set of smooth functions with values in Clifford algebra $h^a: \R^{k,l}\to\cl_{p,q}$
\begin{eqnarray}
h^a(x)=y^a(x)e+y^{a}_b(x) e^b+\cdots+ y^{a}_{1 \ldots n}(x)e^{1\ldots n}=y^a_A(x)e^A,\label{he}
\end{eqnarray}
which satisfy conditions
\begin{eqnarray}
h^a(x) h^b(x) +h^b(x) h^a(x)=2\eta^{ab}e,\qquad a, b=1, \ldots, n,\qquad \forall x \in \R^{k,l}.\label{hh}
\end{eqnarray}
In the case of odd $n=p+q$, we also require the following additional condition to obtain independent elements $h^A$ (see \cite{Shirokov}):
\begin{eqnarray}
\Tr(h^{1}\cdots h^n)=0\qquad \mbox{(in the case of odd $n$)}\label{trh}
\end{eqnarray}
where $\Tr: \cl_{p,q} \to \cl^0_{p,q}$ is the projection operator onto subspace $\cl^0_{p,q}=\{ue\}$ of grade $0$.

The set
$$\{h^A(x)\}=\{e, h^a(x), \ldots, h^{1\ldots n}(x)\}$$
is a basis of the algebra $\cl_{p,q}\T$ of smooth functions with values in the Clifford algebra.

The Clifford algebra $\cl_{p,q}$ is a Lie algebra with respect to the commutator $[U, V]=UV-VU$. Let us consider the following subset $$\cl^\circledS_{p,q}=\cl_{p,q}\setminus\Cen(\cl_{p,q})$$
which is a Lie subalgebra of $\cl_{p,q}$. Note that the center \cite{Lounesto} of $\cl_{p,q}$ is
\begin{eqnarray}
\Cen(\cl_{p,q})=\left\{\begin{array}{ll}
\cl^0_{p,q}, & \hbox{if $n$ is even;} \\
\cl^0_{p,q}\oplus\cl^n_{p,q}, & \hbox{if $n$ is odd,}
\end{array}
\right.\label{center}
\end{eqnarray}
where
$$\cl^j_{p,q}:=\{ \sum_{A: |A|=j} u_A e^A \},\qquad j=0, 1, \ldots, n,$$
is the subspace of $\cl_{p,q}$ of grade $j$.

\begin{thm}\label{th1} For elements $h^a\in\cl_{p,q}\T$ (\ref{he}), which satisfy conditions (\ref{hh}) and (\ref{trh}), we have
$$h^a\in \cl^\circledS_{p,q}\T,\qquad a=1, \ldots, n.$$
\end{thm}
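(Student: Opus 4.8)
The plan is to read the membership $h^a\in\cl^\circledS_{p,q}\T$ as the vanishing of the \emph{central graded components} of $h^a(x)$ at each point $x$. Since $\cl^\circledS_{p,q}$ is the linear complement of $\Cen(\cl_{p,q})$ spanned by the non-central homogeneous subspaces (this is what makes it a Lie subalgebra, whereas the naive set difference $\cl_{p,q}\setminus\Cen(\cl_{p,q})$ is not even a subspace), by (\ref{center}) the claim is equivalent to $\langle h^a\rangle_0=0$ for all $n$, together with $\langle h^a\rangle_n=0$ when $n$ is odd. Here $\langle\cdot\rangle_j$ is the projection onto $\cl^j_{p,q}$, so that $\langle\cdot\rangle_0=\Tr$, and in the notation of (\ref{he}) the two assertions read $y^a=0$ and (for odd $n$) $y^a_{1\ldots n}=0$. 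As all hypotheses hold pointwise, it suffices to argue inside $\cl_{p,q}$ at a fixed $x$.

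The first step is to manufacture, from (\ref{hh}) alone, an inner automorphism reversing the sign of $h^a$. Fix any index $b\neq a$; this is where $n\geq 2$ enters, and for $n=1$ the centre is the whole algebra, $\cl^\circledS_{p,q}=\{0\}$, and the statement degenerates. Putting $a=b$ in (\ref{hh}) gives $(h^b)^2=\eta^{bb}e$, so $h^b$ is invertible with $(h^b)^{-1}=\eta^{bb}h^b$, while the off-diagonal relation gives $h^a h^b=-h^b h^a$. Hence
\begin{eqnarray}
h^b h^a (h^b)^{-1}=\eta^{bb}h^b h^a h^b=-\eta^{bb}h^a (h^b)^2=-(\eta^{bb})^2 h^a=-h^a,\nonumber
\end{eqnarray}
so conjugation by $h^b$ sends $h^a$ to $-h^a$.

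The second step is that the projections onto the \emph{central} grades are invariant under every inner automorphism. For the scalar grade this is the cyclicity of the grade-$0$ part, $\Tr(SUS^{-1})=\Tr(US^{-1}S)=\Tr(U)$. For odd $n$ the pseudoscalar $i:=e^{1\ldots n}$ is central with $i^2=\pm e$, and the grade-$n$ coefficient of any $U$ is recovered from the scalar $\Tr(iU)$; since $i$ is central, $\Tr(i\,SUS^{-1})=\Tr(S\,iU\,S^{-1})=\Tr(iU)$, so $\langle\cdot\rangle_n$ is likewise conjugation-invariant. Applying these invariant projections to $h^b h^a(h^b)^{-1}=-h^a$ gives $\langle h^a\rangle_0=-\langle h^a\rangle_0$ and, for odd $n$, $\langle h^a\rangle_n=-\langle h^a\rangle_n$; thus both central components vanish and $h^a(x)\in\cl^\circledS_{p,q}$ for every $x$ and every $a$.

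The main obstacle here is conceptual rather than computational: one must identify $\cl^\circledS_{p,q}$ as the graded complement of the centre and recognise that the correct target is the vanishing of the central \emph{components} of $h^a$, not the (automatic) non-centrality of $h^a$ itself. Once the sign-reversing conjugation is in place, the conjugation-invariance of $\langle\cdot\rangle_0$ and $\langle\cdot\rangle_n$ finishes the argument. I note that condition (\ref{trh}) is not needed for this componentwise vanishing; it serves instead, via \cite{Shirokov}, to guarantee that the $h^A$ form a basis, and the present argument breaks down precisely in the degenerate case $n=1$, where no index $b\neq a$ exists.
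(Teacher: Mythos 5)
Your proof is correct, but it cannot be compared step-by-step with the paper's, because the paper gives no self-contained argument for Theorem \ref{th1}: it disposes of Theorems \ref{th1}--\ref{th4} in a single paragraph by citing \cite{ofe} and remarking that the proofs there (written for vector fields $h^\mu$) go through with $h^\mu$ replaced by $h^a$. What your argument buys is a complete proof inside the paper itself, and it gets the two delicate points right. First, you correctly resolve the notational tension in the definition $\cl^\circledS_{p,q}=\cl_{p,q}\setminus\Cen(\cl_{p,q})$: read literally as a set difference this is not even a linear subspace, and the assertion ``$h^a$ is not central'' would be nearly vacuous (a central $h^a$ would both commute and anticommute with the invertible $h^b$, forcing $h^a h^b=0$, which is impossible since both factors are invertible). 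Your reading --- zero projection onto the center, i.e., in the notation of (\ref{he}), $y^a=0$ and, for odd $n$, $y^a_{1\ldots n}=0$ --- is the one the paper itself relies on later: in the proof of Theorem \ref{th6} the hypothesis $C_\mu\in\cl^\circledS_{p,q}\T_1$ is used precisely in the form $\pi_{\Cen}(C_\mu)=0$. Second, your mechanism is sound: conjugation by $h^b$ ($b\neq a$) sends $h^a$ to $-h^a$ by the relations (\ref{hh}), while the grade-$0$ projection is invariant under inner automorphisms by cyclicity of $\Tr$, and for odd $n$ the grade-$n$ projection is likewise invariant because $e^{1\ldots n}$ is central; hence both central components of $h^a$ are equal to their own negatives and vanish. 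Your two side observations are also accurate refinements of the statement as printed: condition (\ref{trh}) is not needed for this componentwise vanishing (it serves, via \cite{Shirokov}, to make $\{h^A\}$ a basis), and the case $n=1$ must be excluded, since there $h^1=\pm e^1$ is central and the claim degenerates.
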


Let us consider the following system of equations for unknown elements $C_\mu\in\cl_{p,q}\T_1$
\begin{eqnarray}
\partial_\mu h^a-[C_\mu, h^a]=0,\qquad \mu=1, \ldots, m,\qquad a=1, \ldots, n.
\label{prim}
\end{eqnarray}

It is convenient to consider $C_\mu \in \cl^\circledS_{p,q}\T_1$ because of the commutator in (\ref{prim}). So, let all expressions in equation (\ref{prim}) belong to the Lie algebra $\cl^\circledS_{p,q}$.
We denote the group of all invertible Clifford algebra elements by $\cl^\times_{p,q}$.

\begin{thm}\label{th2}
Let $S : \R^{k,l}\to\cl^\times_{p,q}$ be a function with values in $\cl^\times_{p,q}$ such that
$$
S^{-1}\partial_\mu S\in\cl^\circledS_{p,q}\T_1.
$$
Then the following expressions
$$
\acute h^a=S^{-1}h^a S\in\cl^\circledS_{p,q}\T,\quad
\acute C_\mu=S^{-1}C_\mu S-S^{-1}\partial_\mu S\in\cl^\circledS_{p,q}\T_1
$$
also satisfy the equation
$$
\partial_\mu \acute h^a-[\acute C_\mu,\acute h^a]=0,\quad \forall \mu=1,\ldots, m,\qquad a=1, \ldots, n.
$$
\end{thm}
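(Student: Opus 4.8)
The plan is to separate the three assertions---that $\acute h^a\in\cl^\circledS_{p,q}\T$, that $\acute C_\mu\in\cl^\circledS_{p,q}\T_1$, and that the transformed pair solves the same system---and to treat the two membership claims first, since I expect them to carry most of the conceptual weight while the gauge-transformation identity itself is a short direct computation.

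First I would verify that $\acute h^a$ still satisfies the hypotheses of Theorem \ref{th1}, as this is the cleanest route to $\acute h^a\in\cl^\circledS_{p,q}\T$ rather than checking non-centrality by hand. Conjugation is an algebra homomorphism, so
$$\acute h^a\acute h^b+\acute h^b\acute h^a=S^{-1}(h^ah^b+h^bh^a)S=2\eta^{ab}e,$$
which is (\ref{hh}). For odd $n$ I must also recover (\ref{trh}), and here I would use the cyclic property of the projection $\Tr$ onto $\cl^0_{p,q}$, namely $\Tr(UV)=\Tr(VU)$, to obtain
$$\Tr(\acute h^1\cdots\acute h^n)=\Tr(S^{-1}(h^1\cdots h^n)S)=\Tr(h^1\cdots h^n)=0.$$
With (\ref{hh}) and (\ref{trh}) in hand, Theorem \ref{th1} yields $\acute h^a\in\cl^\circledS_{p,q}\T$.

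Next I would establish $\acute C_\mu\in\cl^\circledS_{p,q}\T_1$. The term $S^{-1}\partial_\mu S$ lies in $\cl^\circledS_{p,q}\T_1$ by assumption, so it remains to show $S^{-1}C_\mu S\in\cl^\circledS_{p,q}\T_1$. Since $\cl^\circledS_{p,q}$ consists exactly of the elements whose projection onto the center $\Cen(\cl_{p,q})$ vanishes, the point is that conjugation by $S$ preserves this central projection. The scalar part is preserved by the cyclic property, $\Tr(S^{-1}C_\mu S)=\Tr(C_\mu)$; for odd $n$ the pseudoscalar $e^{1\ldots n}$ is central, hence commutes with $S$, so the same cyclic argument applied to $\Tr(\,\cdot\,e_{1\ldots n})$ preserves the grade-$n$ component as well. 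Thus $C_\mu\in\cl^\circledS_{p,q}\T_1$ forces $S^{-1}C_\mu S\in\cl^\circledS_{p,q}\T_1$, and $\acute C_\mu$ is a difference of two such elements.

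Finally, for the equation itself I would differentiate $S^{-1}S=e$ to obtain the identity $\partial_\mu(S^{-1})=-S^{-1}(\partial_\mu S)S^{-1}$, then expand $\partial_\mu\acute h^a$ by the Leibniz rule and expand $[\acute C_\mu,\acute h^a]$ directly. After substituting the definitions of $\acute C_\mu$ and $\acute h^a$ and using $SS^{-1}=e$, all terms containing $\partial_\mu S$ cancel in pairs, leaving
$$\partial_\mu\acute h^a-[\acute C_\mu,\acute h^a]=S^{-1}\bigl(\partial_\mu h^a-[C_\mu,h^a]\bigr)S=0$$
by the original system (\ref{prim}). I expect the only genuine subtlety to be the membership arguments above---in particular the claim that conjugation fixes the central scalar and pseudoscalar components---while the final cancellation is routine bookkeeping.
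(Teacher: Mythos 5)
Your proof is correct. Bear in mind, though, that the paper gives no self-contained proof of this theorem to compare against: Theorems \ref{th1}--\ref{th4} are proved only by reference to \cite{ofe}, with the remark that the arguments there for vector fields $h^\mu$ carry over verbatim to the scalar functions $h^a$. Your central computation---differentiating $S^{-1}S=e$ to get $\partial_\mu(S^{-1})=-S^{-1}(\partial_\mu S)S^{-1}$, expanding, and collapsing everything to $S^{-1}\bigl(\partial_\mu h^a-[C_\mu,h^a]\bigr)S=0$---is the standard gauge-invariance argument and is in substance the cited one. What you add beyond the paper's deferral is a careful treatment of the two membership claims: you verify that $\acute h^a$ again satisfies (\ref{hh}) and (\ref{trh}) (so Theorem \ref{th1} applies), and you show that conjugation by $S$ preserves vanishing of the central projection, using cyclicity of $\Tr$ together with centrality of $e^{1\ldots n}$ for odd $n$; both steps are sound. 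One caveat worth making explicit: your closing step, that $\acute C_\mu$ is a ``difference of two such elements,'' implicitly reads $\cl^\circledS_{p,q}$ as the linear subspace of elements with zero projection onto $\Cen(\cl_{p,q})$, not as the literal set difference $\cl_{p,q}\setminus\Cen(\cl_{p,q})$ that the paper writes. That reading is the correct one---the paper itself calls $\cl^\circledS_{p,q}$ a Lie subalgebra and deduces $\pi_{\Cen}(C_\mu)=0$ from $C_\mu\in\cl^\circledS_{p,q}\T_1$ in the proof of Theorem \ref{th6}---but since the set-minus version is not closed under differences, you should state the interpretation you are using.
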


\begin{thm}\label{th3}
The system (\ref{prim}) has unique solution $C_\mu\in\cl^\circledS_{p,q}\T_1$:
\begin{eqnarray}
C_\mu=
\left\{\begin{aligned}
&\sum_{i=1}^{n} \mu_i \pi\{h\}_i ((\partial_\mu h^a) h_a),\quad \mu=1, \ldots, m,\qquad  \hbox{if $n$ is even;} \\
&\sum_{i=1}^{\frac{n-1}{2}} \mu_i \pi\{h\}_{i, n-i} ((\partial_\mu h^a) h_a),\quad \mu=1, \ldots, m,\qquad \hbox{if $n$ is odd,}
\end{aligned}
\right.\label{Cha}
\end{eqnarray}
where
$$\mu_i=\frac{1}{n-\lambda_i},\qquad \lambda_i=(-1)^i(n-2i),$$
and
$$\pi\{h\}_i: \cl_{p,q}\to \cl\{h\}^i_{p,q}=\{\sum_{A: |A|=i}u_{A} h^A\},\quad \pi\{h\}_{i, n-i}=\pi\{h\}_i+\pi\{h\}_{n-i}$$
are projection operators.

We also have
$$\pi\{h\}_i(U)=\sum_{j=0}^n b_{ij}F^j(U),\qquad \pi\{h\}_{i, n-i}(U)=\sum_{j=0}^{\frac{n-1}{2}} g_{ij}F^j(U),\quad \mbox{where}$$
$$B_{n+1}=||b_{ij}||=A_{n+1}^{-1},\, a_{ij}=(\lambda_{j-1})^{i-1},\quad G_{\frac{n-1}{2}}=D_{\frac{n-1}{2}}^{-1},\, d_{ij}=(\lambda_{j-1})^{i-1},$$
$$F^j(U)=\underbrace{F(F(\cdots F(}_{j}U))\cdots),\qquad F(U)=\sum_{a=1}^n h_a U h^a.$$
\end{thm}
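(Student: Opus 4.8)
The plan is to prove uniqueness and existence separately, and then to extract the closed form (\ref{Cha}) by contracting the defining equation with $h_a$.

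For uniqueness, suppose $C_\mu$ and $C'_\mu$ both lie in $\cl^\circledS_{p,q}\T_1$ and solve (\ref{prim}). Their difference $D_\mu=C_\mu-C'_\mu$ then satisfies $[D_\mu,h^a]=0$ for all $a$. Since the $h^a$ generate the whole basis $\{h^A\}$ of $\cl_{p,q}\T$, the element $D_\mu$ commutes with every $h^A$ and hence lies in $\Cen(\cl_{p,q})\T_1$. But $D_\mu\in\cl^\circledS_{p,q}\T_1$ and $\cl^\circledS_{p,q}\cap\Cen(\cl_{p,q})=\{0\}$ by (\ref{center}), so $D_\mu=0$.

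For existence I would argue that $\partial_\mu$ induces an inner derivation. Differentiating (\ref{hh}) and using that the right-hand side $2\eta^{ab}e$ is constant shows that the assignment $h^a\mapsto\partial_\mu h^a$ extends to a well-defined derivation $\delta_\mu$ of $\cl_{p,q}$. When $n$ is even the algebra is central simple, so every derivation is inner and $\delta_\mu=[\,C_\mu,\cdot\,]$ for a unique $C_\mu\in\cl^\circledS_{p,q}\T_1$. When $n$ is odd the center is $\cl^0_{p,q}\oplus\cl^n_{p,q}$; here condition (\ref{trh}) is decisive, for it forces the central pseudoscalar $h^{1\ldots n}$ to have vanishing scalar part and, together with $(h^{1\ldots n})^2\in\R e$, to equal the constant $\pm e^{1\ldots n}$. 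Consequently $\partial_\mu h^{1\ldots n}=0$, so $\delta_\mu$ annihilates the center and is again inner, again with a unique $C_\mu\in\cl^\circledS_{p,q}\T_1$.

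To obtain (\ref{Cha}), I would multiply $\partial_\mu h^a=[C_\mu,h^a]$ on the right by $h_a$ and sum over $a$. Using $\sum_a h^a h_a=ne$ and the term-by-term identity $h^a X h_a=h_a X h^a$ (valid because $\eta$ is diagonal), the right-hand side becomes $nC_\mu-F(C_\mu)=(n-F)(C_\mu)$. The operator $F$ is diagonal in the $h$-grading: a direct Clifford computation gives $F(h^A)=\lambda_{|A|}h^A$ with $\lambda_i=(-1)^i(n-2i)$. Since $n-\lambda_i=0$ exactly on the central grades ($i=0$ always, and $i=n$ when $n$ is odd), the operator $n-F$ is invertible on $\cl^\circledS_{p,q}$, with inverse $\sum_i\mu_i\pi\{h\}_i$ and $\mu_i=(n-\lambda_i)^{-1}$; applying it to $\sum_a(\partial_\mu h^a)h_a$ yields (\ref{Cha}). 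For the polynomial form of the projectors I note that $F^j(h^A)=\lambda_{|A|}^{\,j}h^A$, so expressing $\pi\{h\}_i$ through $\mathrm{Id},F,\ldots,F^n$ is Lagrange interpolation at the nodes $\lambda_0,\ldots,\lambda_n$; inverting the Vandermonde matrix $A_{n+1}=\|\lambda_{j-1}^{\,i-1}\|$ gives the $b_{ij}$, and collapsing the two coinciding central eigenvalues in the odd case gives the reduced $D_{(n-1)/2}$ and the $g_{ij}$. The main obstacle I expect is the existence step for odd $n$: showing that $\delta_\mu$ descends to an inner derivation requires the constancy of the pseudoscalar, and it is precisely here that the normalization (\ref{trh}) is used; the eigenvalue identity $F(h^A)=\lambda_{|A|}h^A$ is the other technical point, though it is a routine induction on $|A|$ once grades $0$ and $1$ are checked.
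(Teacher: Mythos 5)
Your proposal is correct in substance, but it reaches Theorem \ref{th3} by a partly different route than the paper. The paper gives no inline proof: it refers to \cite{ofe}, where the corresponding statement (for fields $h^\mu$, transcribed here to $h^a$) is established via the generalized Pauli theorem --- one produces an invertible field $T(x)$ with $h^a=T^{-1}e^aT$, so that $C_\mu=-T^{-1}\partial_\mu T$ solves (\ref{prim}) --- together with the method of averaging \cite{averaging} to extract the closed formula. Your formula-extraction step is exactly that averaging method: contract (\ref{prim}) with $h_a$, diagonalize $F(U)=\sum_a h_a U h^a$ on the $h$-grades with eigenvalues $\lambda_i=(-1)^i(n-2i)$, and invert $n-F$ off the central grades; this is the same mechanism that appears in the paper's own proof of Theorem \ref{th6}. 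Where you genuinely diverge is existence: instead of constructing $T(x)$, you observe that $h^a\mapsto\partial_\mu h^a$ defines a derivation of $\cl_{p,q}$ at each point (well defined because the differentiated relations (\ref{hh}) hold and $\{h^A\}$ is a basis) and invoke innerness of derivations of semisimple algebras --- the infinitesimal counterpart of the Skolem--Noether/Pauli argument. This buys a self-contained existence proof that sidesteps smoothness questions for $T$, since smoothness of $C_\mu$ then follows from the closed formula once uniqueness is known. Your uniqueness argument (a difference of solutions commutes with every $h^A$, hence is central, hence vanishes) is the standard one, and is consistent with the intended reading of $\cl^\circledS_{p,q}$ as the subspace with vanishing central projection, which is also what the paper uses in the proof of Theorem \ref{th6}.

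Two repairs are needed, neither of which changes the architecture. First, for odd $n$ it is not only the central eigenvalues that coincide: one has $\lambda_{n-i}=(-1)^{n-i}\bigl(2i-n\bigr)=(-1)^i(n-2i)=\lambda_i$ for \emph{every} $i$, and it is precisely this pairing that (a) lets the inverse of $n-F$ be written with the combined projectors $\pi\{h\}_{i,n-i}$ and a single coefficient $\mu_i$, and (b) reduces the Vandermonde system from $n+1$ nodes to the $(n+1)/2$ distinct nodes $\lambda_0,\ldots,\lambda_{(n-1)/2}$, producing the matrix $D$ and the coefficients $g_{ij}$. As written, ``collapsing the two coinciding central eigenvalues'' would only merge $\lambda_0$ with $\lambda_n$ and would give an $n\times n$ system, not the stated reduced one. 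Second, your step ``$\delta_\mu$ annihilates the center and is again inner'' for odd $n$ needs a word of justification: when $\cl_{p,q}$ is a full complex matrix algebra viewed over $\R$, killing the center forces $\C$-linearity and then innerness by central simplicity over $\C$; when $\cl_{p,q}$ splits into two simple factors, the derivation kills the central idempotents, hence preserves each factor and is inner on each. Equivalently, cite the standard fact that every derivation of a finite-dimensional semisimple real algebra is inner; with that, your argument is complete.
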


\begin{thm}\label{th4} The following condition of zero-curvature follows from (\ref{prim}):
\begin{eqnarray}
\partial_\mu C_\nu-\partial_\nu C_\mu-[C_\mu, C_\nu]=0,\qquad \mu, \nu=1, \ldots, m.\label{CC}
\end{eqnarray}
\end{thm}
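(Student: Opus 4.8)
The plan is to differentiate the defining relation (\ref{prim}) and to play off the commutativity of the partial derivatives against the Jacobi identity, concluding that the curvature expression commutes with every $h^a$ and is therefore central, while at the same time it has no central component, so that it must vanish.

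First I would rewrite (\ref{prim}) as $\partial_\mu h^a=[C_\mu,h^a]$ and compute the second mixed derivative in the two possible orders. Applying $\partial_\nu$ and using the Leibniz rule $\partial_\nu[C_\mu,h^a]=[\partial_\nu C_\mu,h^a]+[C_\mu,\partial_\nu h^a]$ followed by (\ref{prim}) once more gives
\begin{eqnarray*}
\partial_\nu\partial_\mu h^a=[\partial_\nu C_\mu,h^a]+[C_\mu,[C_\nu,h^a]],
\end{eqnarray*}
and interchanging $\mu\leftrightarrow\nu$ yields the analogous formula for $\partial_\mu\partial_\nu h^a$.

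Next I would subtract these two identities. Since $\partial_\mu\partial_\nu h^a=\partial_\nu\partial_\mu h^a$, the left-hand sides cancel, and the Jacobi identity $[C_\mu,[C_\nu,h^a]]-[C_\nu,[C_\mu,h^a]]=[[C_\mu,C_\nu],h^a]$ lets me collect the remaining terms into
\begin{eqnarray*}
[\partial_\mu C_\nu-\partial_\nu C_\mu-[C_\mu,C_\nu],\,h^a]=0,\qquad a=1,\ldots,n.
\end{eqnarray*}
Writing $F_{\mu\nu}:=\partial_\mu C_\nu-\partial_\nu C_\mu-[C_\mu,C_\nu]$, this says that $F_{\mu\nu}$ commutes with every generator $h^a$. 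Because $\{h^A\}$ is a basis of $\cl_{p,q}\T$, the elements $h^a$ generate the whole algebra, so $F_{\mu\nu}$ commutes with all of $\cl_{p,q}\T$; that is, $F_{\mu\nu}$ takes values in $\Cen(\cl_{p,q})$.

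Finally I would argue that $F_{\mu\nu}$ carries no central part at all, whence $F_{\mu\nu}=0$. The terms $\partial_\mu C_\nu$ and $\partial_\nu C_\mu$ lie in $\cl^\circledS_{p,q}\T_2$ because $C_\mu\in\cl^\circledS_{p,q}\T_1$ and differentiation acts only on the real coefficients, preserving the grade decomposition. The commutator $[C_\mu,C_\nu]$ also lies in $\cl^\circledS_{p,q}\T_2$: its grade-$0$ part vanishes since $\Tr(UV)=\Tr(VU)$, and (for odd $n$) its grade-$n$ part vanishes as well, the pseudoscalar being central. Hence $F_{\mu\nu}\in\cl^\circledS_{p,q}\T_2$, and since $\cl_{p,q}=\Cen(\cl_{p,q})\oplus\cl^\circledS_{p,q}$ the only element that is simultaneously central and in $\cl^\circledS_{p,q}$ is $0$. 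This yields (\ref{CC}). I expect the main obstacle to be exactly this last step, namely verifying carefully that commutators have no central component — immediate for the scalar part by cyclicity of $\Tr$, but requiring the centrality of the pseudoscalar in the odd-dimensional case.
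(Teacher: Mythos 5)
Your proof is correct and follows essentially the same route the paper intends: the paper's own ``proof'' of Theorem \ref{th4} is only a citation of \cite{ofe}, and the argument there is exactly your differentiate--antisymmetrize--Jacobi computation showing that $F_{\mu\nu}=\partial_\mu C_\nu-\partial_\nu C_\mu-[C_\mu,C_\nu]$ commutes with every $h^a$, hence is central, combined with the observation that $F_{\mu\nu}$ has vanishing central projection (so that $F_{\mu\nu}\in\Cen(\cl_{p,q})\cap\cl^\circledS_{p,q}=\{0\}$). You also handled the two genuinely delicate points correctly: reading $\cl^\circledS_{p,q}$ as the complement of the center in the direct-sum decomposition $\cl_{p,q}=\Cen(\cl_{p,q})\oplus\cl^\circledS_{p,q}$, and checking that commutators have no grade-$0$ part (cyclicity of $\Tr$) nor, for odd $n$, grade-$n$ part (centrality of the pseudoscalar).
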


\begin{proof} (of Theorems \ref{th1} - \ref{th4}). Analogous theorems are proved in \cite{ofe} for the vector fields $h^\mu\in \cl_{p,q}\T^1$ in pseudo-Euclidean space $\R^{p,q}$ (w.r.t. the orthogonal transformations of coordinates), which we also discuss in Section \ref{sec5} of the current paper. But all these statements can be also proved for the scalar functions $h^a\in \cl_{p,q}\T$ in pseudo-Euclidean space $\R^{k,l}$, $k\neq p$, $l\neq q$ (it is sufficient to change all $h^\mu$ to $h^a$ in the proof of the corresponding theorems in \cite{ofe}).
\end{proof}

\section{Spin connection of the general form}\label{sec2}

We use the method of averaging in Clifford algebra \cite{averaging} to obtain another form of unique solution of the system (\ref{prim}).

\begin{thm}\label{th5} From the system (\ref{prim}) it follows that
\begin{eqnarray}
\partial_\mu h^A-[C_\mu, h^A]=0,\qquad \mu=1, \ldots, m\label{hA}
\end{eqnarray}
for all ordered multi-indices $A$ of a length between $0$ and $n$.
\end{thm}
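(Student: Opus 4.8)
The plan is to prove that the covariant constancy of the generators $h^a$ propagates to all basis elements $h^A$, which are products of the $h^a$. The key observation is that the operation $U \mapsto [C_\mu, U]$ behaves as a derivation with respect to the Clifford product, just like $\partial_\mu$ does, so the statement should follow by an induction on the length $|A|$ of the multi-index together with a Leibniz-type argument.

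First I would establish the base cases. For $|A|=0$ we have $h^\emptyset = e$, and $\partial_\mu e = 0$ while $[C_\mu, e] = C_\mu e - e C_\mu = 0$, so (\ref{hA}) holds trivially. For $|A|=1$ the statement is exactly the hypothesis (\ref{prim}). Next I would record the Leibniz rule for the commutator: for any two Clifford-algebra-valued functions $U, V$ one has
\begin{eqnarray}
\partial_\mu(UV)-[C_\mu, UV] = (\partial_\mu U - [C_\mu, U])V + U(\partial_\mu V - [C_\mu, V]).\nonumber
\end{eqnarray}
This identity is a direct computation: expand $[C_\mu, UV] = C_\mu UV - UV C_\mu$, insert the term $\pm U C_\mu V$, and combine with the ordinary product rule $\partial_\mu(UV) = (\partial_\mu U)V + U(\partial_\mu V)$. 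This is the technical heart of the argument, though it is routine.

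With the Leibniz rule in hand, I would proceed by induction. Assume (\ref{hA}) holds for all ordered multi-indices of length at most $r$, and let $A$ have length $r+1$. Writing $h^A = h^{a_1 \cdots a_{r+1}} = h^{a_1} h^{a_2 \cdots a_{r+1}}$ (the basis element $h^A$ factors as a product of generators, using the ordering of the multi-index), I set $U = h^{a_1}$ and $V = h^{a_2 \cdots a_{r+1}}$. By the inductive hypothesis both $\partial_\mu U - [C_\mu, U] = 0$ and $\partial_\mu V - [C_\mu, V] = 0$, so the Leibniz rule gives $\partial_\mu h^A - [C_\mu, h^A] = 0$ directly. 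This closes the induction and yields the claim for every $A$ with $0 \le |A| \le n$.

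The main obstacle I anticipate is a bookkeeping subtlety rather than a conceptual one: the element $h^A$ for an ordered multi-index $A = a_1 \cdots a_{r+1}$ must genuinely equal the product $h^{a_1} \cdots h^{a_{r+1}}$ of the corresponding generators, so that the factorization used in the induction is legitimate. One must verify that the $h^A$ are defined precisely as these ordered products (in parallel with the fixed basis $e^A$ of $\cl_{p,q}$), and that the anticommutation relations (\ref{hh}) guarantee that any reordering needed to extract a single leading factor $h^{a_1}$ only introduces terms of strictly lower length or scalar multiples of $e$, all of which are already covered by the inductive hypothesis. Once this is confirmed, the Leibniz rule does all the remaining work, and no genuinely hard estimate or new idea is required.
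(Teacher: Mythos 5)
Your proof is correct and is essentially the paper's own argument in inductive form: the paper multiplies each instance of (\ref{prim}) by the remaining factors and sums so that the intermediate commutator terms telescope, which is exactly your Leibniz identity $\partial_\mu(UV)-[C_\mu,UV]=(\partial_\mu U-[C_\mu,U])V+U(\partial_\mu V-[C_\mu,V])$ applied repeatedly, and both proofs handle $|A|=0$ by noting $e$ is constant and central. Your worried ``bookkeeping subtlety'' is in fact a non-issue, since $h^A$ for an ordered multi-index $A=a_1\ldots a_j$ is by definition the ordered product $h^{a_1}\cdots h^{a_j}$, so the factorization $h^A=h^{a_1}h^{a_2\cdots a_j}$ needs no reordering at all.
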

\begin{proof}
Let us multiply both sides of (\ref{prim}) on the left and on the right by the required number of elements $h^{a_j}$:
\begin{eqnarray}
&&\partial_\mu(h^{a_1})h^{a_2} \cdots h^{a_j}-C_\mu h^{a_1}h^{a_2} \cdots h^{a_j}+h^{a_1}C_\mu h^{a_2}\cdots h^{a_j}=0,\nonumber\\
&&h^{a_1}\partial_\mu (h^{a_2}) h^{a_3}\cdots h^{a_j}-h^{a_1}C_\mu h^{a_2}h^{a_3}\cdots h^{a_j}+ h^{a_1}h^{a_2}C_\mu h^{a_3}\cdots h^{a_j}=0,\nonumber\\
&&\cdots\nonumber\\
&&h^{a_1}\cdots h^{a_{j-1}}\partial_\mu(h^{a_j})-h^{a_1}\cdots h^{a_{j-1}}C_\mu h^{a_j}+h^{a_1}\cdots h^{a_{j-1}}h^{a_j}C_\mu=0.\nonumber
\end{eqnarray}
Summing these equations, we get
$$\partial_\mu(h^{a_1}\cdots h^{a_j})-[C_\mu, h^{a_1}\cdots h^{a_j}]=0$$
for arbitrary indices $a_1, \ldots, a_j$. We obtain (\ref{hA}) for all ordered multi-indices of a length between $0$ and $n$. In the case of empty multi-index, we have
$$\partial_\mu(e)-[C_\mu, e]=0$$
because $e$ does not depend on $x$ and lies in the center $\Cen(\cl_{p,q})$ of the Clifford algebra.
\end{proof}

\begin{thm}\label{th6} The system (\ref{prim})
has unique solution $C_\mu\in\cl^\circledS_{p,q}\T_1$
\begin{eqnarray}
C_\mu=\frac{1}{2^n}(\partial_\mu h^A) h_A,\qquad \mu=1, \ldots, m.\label{ChA}
\end{eqnarray}
In the case of odd $n$, the expression (\ref{ChA}) can be represented as
\begin{eqnarray}
C_\mu=\frac{1}{2^{n-1}}\sum_{|A|=1}^{\frac{n-1}{2}}(\partial_\mu h^A) h_A,\qquad \mu=1, \ldots, m.\label{ChA2}
\end{eqnarray}
\end{thm}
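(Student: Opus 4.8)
The plan is to avoid a head-on substitution of (\ref{ChA}) into (\ref{prim}) and instead to lean on what is already available: Theorem \ref{th3} guarantees a \emph{unique} solution $C_\mu\in\cl^\circledS_{p,q}\T_1$, and Theorem \ref{th5} upgrades the defining relations to $\partial_\mu h^A=[C_\mu,h^A]$ for \emph{every} ordered multi-index $A$, not just for the generators. So I would fix that unique $C_\mu$ and show that the right-hand side of (\ref{ChA}) reproduces it; uniqueness of the final formula is then inherited for free. The tool that converts the $2^n$ identities of Theorem \ref{th5} into the single closed form (\ref{ChA}) is the averaging operation in the Clifford algebra from \cite{averaging}.

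Concretely, I would insert $\partial_\mu h^A=C_\mu h^A-h^A C_\mu$ into the candidate and expand the (implicitly summed) index $A$:
$$\frac{1}{2^n}(\partial_\mu h^A)h_A=\frac{1}{2^n}\sum_A (C_\mu h^A-h^A C_\mu)h_A=\frac{1}{2^n}\sum_A C_\mu h^A h_A-\frac{1}{2^n}\sum_A h^A C_\mu h_A.$$
In the first sum each summand collapses since $h^A h_A=e$, and as there are $2^n$ multi-indices it returns exactly $C_\mu$. The second sum is the conjugation-average $M(C_\mu):=\frac{1}{2^n}\sum_A h^A C_\mu h_A$, so the whole computation reduces to $\frac{1}{2^n}(\partial_\mu h^A)h_A=C_\mu-M(C_\mu)$, and everything hinges on the identity $M(C_\mu)=0$.

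This is where the genuine content lies and what I expect to be the main obstacle. Since the basis $\{h^A\}$ obeys the same relations (\ref{hh}) as $\{e^A\}$, I would invoke the result that the average $M$ over conjugation by the basis elements is precisely the projector onto the center $\Cen(\cl_{p,q})$ and annihilates the complementary subspace. Because $C_\mu$ was taken in $\cl^\circledS_{p,q}\T_1$, i.e. with vanishing central part, this yields $M(C_\mu)=0$ and hence $C_\mu=\frac{1}{2^n}(\partial_\mu h^A)h_A$, which is (\ref{ChA}). The delicate point is the correct form of this projection property in the odd case, where the center is the larger space $\cl^0_{p,q}\oplus\cl^n_{p,q}$ rather than just $\cl^0_{p,q}$.

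For the odd-$n$ reduction (\ref{ChA2}) I would first observe that $h^{1\ldots n}$ is central (it is the image of the central element $e^{1\ldots n}$), so Theorem \ref{th5} forces $\partial_\mu h^{1\ldots n}=[C_\mu,h^{1\ldots n}]=0$; together with $\partial_\mu e=0$ this removes the $|A|=0$ and $|A|=n$ terms. Then I would use the duality $h^A\mapsto h^{\bar A}$ implemented by multiplication by the constant central element $h^{1\ldots n}$: writing $h^{\bar A}=\pm h^A h^{1\ldots n}$ with $|\bar A|=n-|A|$, the factors of $h^{1\ldots n}$ and its inverse cancel (it is central and, by the above, $x$-independent) and the two equal signs square to $+1$, giving $(\partial_\mu h^{\bar A})h_{\bar A}=(\partial_\mu h^A)h_A$. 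Pairing grade $i$ with grade $n-i$ for $i=1,\ldots,\frac{n-1}{2}$ thus doubles each surviving contribution and turns the prefactor $\frac{1}{2^n}$ into $\frac{1}{2^{n-1}}$, producing (\ref{ChA2}). The only thing to monitor here is the sign bookkeeping in the relation between $h^A$, $h^{\bar A}$ and $h^{1\ldots n}$, which I expect to cancel uniformly.
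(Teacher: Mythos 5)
Your proof is correct and follows essentially the same route as the paper's: both hinge on Theorem \ref{th5} together with the averaging identity $\frac{1}{2^n}\sum_A h^A U h_A=\pi_{\Cen}(U)$ from \cite{averaging}, the vanishing central part of $C_\mu\in\cl^\circledS_{p,q}\T_1$, uniqueness via Theorem \ref{th3}, and conjugation by the central, $x$-independent element $h^{1\ldots n}$ for the odd-$n$ reduction (\ref{ChA2}). The only cosmetic difference is the direction of the computation: you substitute $\partial_\mu h^A=[C_\mu,h^A]$ into the candidate formula, while the paper multiplies (\ref{hA}) by $h_A$ on the right and solves for $C_\mu$.
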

\begin{proof} Let us multiply both sides of (\ref{hA}) by $h_A$ on the right:
\begin{eqnarray}
(\partial_\mu h^A)h_A-C_\mu h^A h_A+h^A C_\mu h_A=0.\label{66}
\end{eqnarray}
Since $h^A h_A=2^n e$ and $h^A C_\mu h_A=2^n \pi_{\Cen}(C_\mu)$  (see \cite{averaging}), we get
$$2^n (C_\mu-\pi_{\Cen}(C_\mu))=(\partial_\mu h^A)h_A.$$
Here we denote the projection operator onto the center of Clifford algebra (\ref{center}) by
$$\pi_{\Cen}:\cl_{p,q}\to\Cen(\cl_{p,q}).$$
Using $C_\mu\in \cl^\circledS_{p,q}\T_1$, we obtain (\ref{ChA}).
By Theorem \ref{th3} the solution $C_\mu\in\cl^\circledS_{p,q}\T_1$ of the system (\ref{prim}) is unique.

In the case of odd $n$, the element $h^{1\ldots n}$ does not depend on $x$ and equals $\pm e^{1\ldots n}$ (see \cite{Shirokov}). Using $h^{1\ldots n}=\pm e^{1\ldots n}\in\Cen(\cl_{p,q})$, we get
$$\sum_{A: |A|=j}(\partial_\mu h^A) h_A=h^{1\ldots n}\sum_{A: |A|=j}(\partial_\mu h^A) h_A h_{1\ldots n}=\sum_{A: |A|=n-j}(\partial_\mu h^A) h_A.$$
Finally, we have
$$(\partial_\mu h^A) h_A=\sum_{|A|=0}^n(\partial_\mu h^A) h_A=2\sum_{|A|=1}^{\frac{n-1}{2}}(\partial_\mu h^A) h_A,\qquad \mu=1, \ldots, m$$
and (\ref{ChA2}). The theorem is proved.
\end{proof}

\begin{ex}
Let us consider the case $n=2$. Using (\ref{Cha}), we get (see \cite{ofe})
\begin{eqnarray}
C_\mu&=&\frac{1}{2}\pi\{h\}_1((\partial_\mu h^a)h_a)+\frac{1}{4}\pi\{h\}_2((\partial_\mu h^a)h_a)\nonumber\\
&=&\frac{1}{2}(\partial_\mu h^a) h_a-\frac{1}{16}h^b (\partial_\mu h^a)h_a h_b-\frac{3}{32}h^c h^b (\partial_\mu h^a)h_a h_b h_c.\label{yy6}
\end{eqnarray}
Let us show that this expression coincides with (\ref{ChA}). We have
\begin{eqnarray}
&&h^b (\partial_\mu h^a)h_a h_b=-(\partial_\mu h^1)h_1+h^1 (\partial_\mu h^2)h_2 h_1+h^2 (\partial_\mu h^1)h_1 h_2-(\partial_\mu h^2)h_2,\nonumber\\
&&h^c h^b (\partial_\mu h^a)h_a h_b h_c=(\partial_\mu h^1)h_1+(\partial_\mu h^2)h_2+(\partial_\mu h^2)h_2-h^2 (\partial_\mu h^1) h_1 h_2 \nonumber\\
&&-h^1 (\partial_\mu h^2)h_2 h_1-h^2 (\partial_\mu h^1) h_1 h_2-h^1 (\partial_\mu h^2)h_2 h_1 +(\partial_\mu h^1)h_1.\nonumber
\end{eqnarray}
Finally, we obtain from (\ref{yy6})
\begin{eqnarray}
C_\mu&=&\frac{3}{8}(\partial_\mu h^a) h_a+\frac{1}{8}(h^1 (\partial_\mu h^2)h_2 h_1+h^2 (\partial_\mu h^1) h_1 h_2)\nonumber\\
&=&\frac{1}{4}(\partial_\mu h^a) h_a+\frac{1}{8}(\partial_\mu (h^{1}h^2)h_{2}h_1+\partial_\mu (h^{2} h^1)h_{1}h_2)\nonumber\\
&=&\frac{1}{4}(\partial_\mu h^a) h_a+\frac{1}{4}(\partial_\mu h^{12})h_{12}=\frac{1}{4}(\partial_\mu h^A)h_A.\nonumber
\end{eqnarray}
Also we can verify that
$$C_\mu=\frac{1}{4}(\partial_\mu h^A)h_A=\frac{1}{4}(\partial_\mu h^1) h_1+\frac{1}{4}(\partial_\mu h^2) h_2+\frac{1}{4}(\partial_\mu h^{12})h_{12}$$
is a solution of (\ref{prim}). We have
\begin{eqnarray}
&&[C_\mu, h^1]=\frac{1}{4}(\partial_\mu h^1) h_1 h^1+\frac{1}{4}(\partial_\mu h^2) h_2 h^1+\frac{1}{4}(\partial_\mu h^{1})h^2 h_2 h_1 h^1\nonumber\\
&&+\frac{1}{4}h^1 (\partial_\mu h^2) h_2 h_1 h^1-\frac{1}{4}h^1(\partial_\mu h^1) h_1-\frac{1}{4}h^1(\partial_\mu h^2) h_2-\frac{1}{4}h^1(\partial_\mu h^{1})h^2h_{2}h_1\nonumber\\
&&-\frac{1}{4}h^1 h^1 (\partial_\mu h^2) h_2 h_1=\frac{1}{4}\partial_\mu h^1+\frac{1}{4}(\partial_\mu h^2) h_2 h^1+\frac{1}{4}\partial_\mu h^{1}+\frac{1}{4}h^1 (\partial_\mu h^2) h_2\nonumber\\
&&+\frac{1}{4}\partial_\mu h^1-\frac{1}{4}h^1(\partial_\mu h^2) h_2+\frac{1}{4}\partial_\mu h^{1}-\frac{1}{4}(\partial_\mu h^2) h_2 h^1=\partial_\mu h^1.\nonumber
\end{eqnarray}
We can similarly verify $[C_\mu, h^2]=\partial_\mu h^2$.
\end{ex}

\begin{ex}
Let us consider the case $n=3$. Since (\ref{Cha}), we have
\begin{eqnarray}
C_\mu=\frac{1}{4}\pi\{h\}_{1, 2}((\partial_\mu h^a)h_a)=\frac{3}{16}(\partial_\mu h^a)h_a-\frac{1}{16}h^b(\partial_\mu h^a) h_a h_b.\label{nov}
\end{eqnarray}
Using (\ref{ChA}), we get
$$C_\mu=\frac{1}{8}(\partial_\mu h^A)h_A=\frac{1}{8}(\partial_\mu h^a) h_a+\frac{1}{8}(\partial_\mu h^{ab}) h_{ab},$$
because $h^{123}$ does not depend on $x$ and equals $\pm e^{123}$ (see \cite{Shirokov}).
Since $h^{123}=\pm e^{123}\in\Cen(\cl_{p,q})$, we obtain
$$(\partial_\mu h^a) h_a=h^{123}(\partial_\mu h^a) h_a h_{123}=(\partial_\mu h^{ab}) h_{ab}.$$
Finally, we have in the case $n=3$
\begin{eqnarray}
C_\mu=\frac{1}{4}(\partial_\mu h^a) h_a\in(\cl^1_{p,q}\oplus\cl^2_{p,q})\T_1.\label{nov2}
\end{eqnarray}
Let us show that (\ref{nov2}) coincides with (\ref{nov}). We have
\begin{eqnarray}
&&X:=\partial_\mu(h^{ab})h_{ab}=\frac{1}{2}\partial_\mu(h^a h^b)h_b h_a=\frac{1}{2}(\partial_\mu h^a) h^b h_b h_a+\frac{1}{2}h^a (\partial_\mu h^b) h_b h_a\nonumber\\
&&=\frac{3}{2}(\partial_\mu h^a)h_a+\frac{1}{2}h^a (\partial_\mu h^b) h_b h_a=\frac{3}{2}X+\frac{1}{2}h^a (\partial_\mu h^b) h_b h_a.\nonumber
\end{eqnarray}
Then
$$h^a (\partial_\mu h^b) h_b h_a=-X=-\partial_\mu(h^{ab})h_{ab}=-(\partial_\mu h^a) h_a.$$
Using this identity we get (\ref{nov2}) from (\ref{nov}).
\end{ex}

\begin{ex} Let us consider the case $h^a\in\cl^1_{p,q}\T$. We have (instead of a general case (\ref{he}))
\begin{eqnarray}
h^a(x)=y^a_b(x) e^b\in\cl^1_{p,q}\T.\label{hy}
\end{eqnarray}
Using (\ref{hh}), we get the following conditions for the elements $y^a_b$:
\begin{eqnarray}
y^a_b y^c_d \eta^{bd}=\eta^{ac},\label{yy}
\end{eqnarray}
which are orthogonality conditions for the matrix $Y=||y^a_b||$:
$$Y\in\Or(p,q)=\{Y\in\Mat(n, \R), Y^\T \eta Y=\eta\}.$$

In this particular case, we have $\partial_\mu h^a\in\cl^1_{p,q}\T_1$.
It can be proved that (see \cite{marchuk1})
$$(\partial_\mu h^a) h_a=(\partial_\mu h_a) h^a=-h_a(\partial_\mu h^a)= -h^a(\partial_\mu h_a)\in\cl^2_{p,q}\T_1$$
and the solution of (\ref{prim}) will be
\begin{eqnarray}
C_\mu=\frac{1}{4}(\partial_\mu h^a) h_a\in\cl^2_{p,q}\T_1,\label{spincon}
\end{eqnarray}
in the case of arbitrary $n$. This expression is known as \emph{spin connection}.
Since (\ref{hy}), we have
\begin{eqnarray}
C_\mu=\frac{1}{4}\eta_{ac}\partial_\mu(y^a_b)y^c_d e^b e^d.\label{Comega}
\end{eqnarray}
Using (\ref{yy}), we get
$$\partial_\mu(y^a_b) y^c_d \eta^{bd}+y^a_b \partial_\mu(y^c_d) \eta^{bd}=0$$
and can rewrite (\ref{Comega}) in the following way
$$C_\mu=\omega_{\mu bd}e^{bd}\in\cl^2_{p,q}\T_1,\qquad \omega_{\mu bd}=\frac{1}{2}\sum_{b<d}\eta_{ac}\partial_\mu(y^a_b)y^c_d.$$

The expression (\ref{spincon}) coincides with (\ref{Cha}) and (\ref{ChA}) in the case $h^a\in\cl^1_{p,q}\T$. Note that the expressions (\ref{Cha}) and (\ref{ChA}) do not belong to $\cl^2_{p,q}\T_1$ in the general case and we call them \emph{spin connection of the general form}.
\end{ex}

\section{Covariant derivatives and covariantly constant tensor fields with values in Clifford algebra.}\label{sec3}

Let us consider a tensor field with values in the Clifford algebra $U^{\Phi}_{\Psi}=U^{\phi_1 \ldots \phi_r}_{\psi_1 \ldots \psi_s}(x):\R^{k,l}\to\cl_{p,q}$, $|\Phi|=r$, $|\Psi|=s$:
$$U^{\Phi}_{\Psi}(x)=u^{\Phi}_{\Psi A}(x)e^A\in\cl_{p,q}\T^r_s,\qquad u^{\Phi}_{\Psi A}(x):\R^{k,l}\to\R.$$
We can take $h^b(x)=y^b_A(x) e^A$ (see (\ref{he}), (\ref{hh}), and (\ref{trh})) and obtain another basis $h^B(x)=y^B_A (x) e^A$ of $\cl_{p,q}\T$ for some $y^B_A=y^B_A(x):\R^{k,l}\to\R$. We have
$$U^{\Phi}_{\Psi}(x)=u\{h\}^{\Phi}_{\Psi B}(x)h^B(x),\qquad u\{h\}^{\Phi}_{\Psi B}(x):\R^{k,l}\to\R,$$
where
$$u^\Phi_{\Psi A}(x)=u\{h\}^\Phi_{\Psi B}(x) y^B_A(x).$$
Let us consider the following \emph{operation of covariant differentiation} that depends on the basis $\{h^A\}$ of $\cl_{p,q}\T$
\begin{eqnarray}
D_\mu U^\Phi_{\Psi}:=\partial_\mu U^\Phi_{\Psi}-[C_\mu, U^\Phi_{\Psi}],\qquad U^\Phi_\Psi\in\cl_{p,q}\T^r_s,\label{covder}
\end{eqnarray}
where $C_\mu=C_\mu (x)\in\cl^\circledS_{p,q}\T_1$ is a unique solution of (\ref{prim}).

\begin{thm}\label{th7} For arbitrary tensor field with values in $\cl_{p,q}$
$$U^{\Phi}_{\Psi}(x)=u\{h\}^{\Phi}_{\Psi B}(x)h^B(x)\in\cl_{p,q}\T^r_s$$ we have
\begin{eqnarray}
D_\mu (U^{\Phi}_{\Psi}(x))= \partial_\mu (u\{h\}^{\Phi}_{\Psi B}(x)) h^B(x).\label{covder2}
\end{eqnarray}
\end{thm}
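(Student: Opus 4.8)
The plan is to compute $D_\mu U^\Phi_\Psi$ directly from the definition (\ref{covder}) and to exploit two facts: that the components $u\{h\}^\Phi_{\Psi B}$ are real-valued scalar functions, and that the \emph{entire} basis $\{h^B\}$ is covariantly constant by Theorem \ref{th5}. First I would substitute $U^\Phi_\Psi = u\{h\}^\Phi_{\Psi B} h^B$ into (\ref{covder}) and split the computation into the ordinary-derivative part and the commutator part. Since each $u\{h\}^\Phi_{\Psi B}$ maps into $\R$, it lies in the center $\Cen(\cl_{p,q})$; this centrality is the observation that drives the whole argument.

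For the ordinary-derivative part I would apply the Leibniz rule, writing
$$\partial_\mu (u\{h\}^\Phi_{\Psi B}\, h^B) = (\partial_\mu u\{h\}^\Phi_{\Psi B})\, h^B + u\{h\}^\Phi_{\Psi B}\, (\partial_\mu h^B).$$
For the commutator part, because scalar functions commute with every Clifford algebra element, they factor out of the bracket:
$$[C_\mu,\, u\{h\}^\Phi_{\Psi B}\, h^B] = u\{h\}^\Phi_{\Psi B}\, [C_\mu, h^B].$$
Subtracting, the two terms carrying $u\{h\}^\Phi_{\Psi B}$ combine into $u\{h\}^\Phi_{\Psi B}\,(\partial_\mu h^B - [C_\mu, h^B])$, leaving $(\partial_\mu u\{h\}^\Phi_{\Psi B})\, h^B$ untouched.

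The decisive step is then to invoke Theorem \ref{th5}: equation (\ref{hA}) states precisely that $\partial_\mu h^B - [C_\mu, h^B] = 0$ for every ordered multi-index $B$ of length between $0$ and $n$, so the entire $u\{h\}^\Phi_{\Psi B}$-factor vanishes and only $(\partial_\mu u\{h\}^\Phi_{\Psi B})\, h^B$ survives, which is exactly (\ref{covder2}). The main (and only genuine) obstacle is recognizing that one needs covariant constancy of all the basis elements $h^B$ and not merely of the generators $h^a$ from (\ref{prim}); this is supplied by Theorem \ref{th5}, whose proof extended (\ref{prim}) from the $h^a$ to arbitrary products $h^{a_1}\cdots h^{a_j}$. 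The fact that the coefficients are central is what allows them to be pulled through the commutator cleanly, so no further structural input is required.
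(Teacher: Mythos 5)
Your proposal is correct and follows essentially the same route as the paper's own proof: substitute $U^\Phi_\Psi = u\{h\}^\Phi_{\Psi B} h^B$ into the definition (\ref{covder}), apply the Leibniz rule, pull the scalar coefficients through the commutator, and invoke Theorem \ref{th5} to annihilate the terms $u\{h\}^\Phi_{\Psi B}\bigl(\partial_\mu h^B - [C_\mu, h^B]\bigr)$. No meaningful difference in decomposition or key lemma.
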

\begin{proof} Using Theorem \ref{th5} and (\ref{covder}), we get
\begin{eqnarray}
&&D_\mu (U^{\Phi}_{\Psi})=D_\mu(u\{h\}^{\Phi}_{\Psi B}h^B)=\partial_\mu(u\{h\}^{\Phi}_{\Psi B}h^B)-[C_\mu, u\{h\}^{\Phi}_{\Psi B}h^B]\nonumber\\
&&=\partial_\mu(u\{h\}^{\Phi}_{\Psi B})h^B+u\{h\}^{\Phi}_{\Psi B}\partial_\mu(h^B)-u\{h\}^{\Phi}_{\Psi B}[C_\mu, h^B]\nonumber\\
&&=\partial_\mu(u\{h\}^{\Phi}_{\Psi B})h^B+u\{h\}^{\Phi}_{\Psi B}(\partial_\mu(h^B)-[C_\mu, h^B])\nonumber\\
&&=\partial_\mu(u\{h\}^{\Phi}_{\Psi B})h^B.\nonumber
\end{eqnarray}
The theorem is proved.
\end{proof}

Let us consider a set of \emph{covariantly constant tensor fields with values in Clifford algebra} $U^\Phi_\Psi\in\cl_{p,q}\T^r_s$
$$\M\cl_{p,q}\T^r_s:=\{U^\Phi_\Psi\in\cl_{p,q}\T^r_s,\quad D_\mu U^\Phi_\Psi=0\}.$$

Elements of this set have the form
$$U^{\phi_1 \ldots \phi_r}_{\psi_1 \ldots \psi_s}=u^{\phi_1 \ldots \phi_r}_{\psi_1 \ldots \psi_s} e+u^{\phi_1 \ldots \phi_r}_{\psi_1 \ldots \psi_s a_1} h^a_1+ u^{\phi_1 \ldots \phi_r}_{\psi_1 \ldots \psi_s a_1 a_2}h^{\nu_1 \nu_2}+\cdots+ u^{\phi_1 \ldots \phi_r}_{\psi_1 \ldots \psi_s 1 \ldots n}h^{1\ldots n},$$
where all $u^{\Phi}_{\Psi A}$ do not depend on $x$.

\begin{thm}\label{th8} The operation (\ref{covder}) has the following properties:
\begin{eqnarray}
D_\mu(U^\Phi_{\Psi}+V^\Phi_\Psi)=D_\mu(U^\Phi_\Psi)+D_\mu(V^\Phi_\Psi),\qquad D_\mu(\lambda U^\Phi_\Psi)=\lambda D_\mu(U^\Phi_\Psi),\quad \lambda\in\R,\nonumber\\
D_\mu(U^\Phi_\Psi W^\Omega_\Sigma)=U^\Phi_\Psi D_\mu(W^\Omega_\Sigma)+D_\mu(U^\Phi_\Psi) W^\Omega_\Sigma,\qquad D_\mu (D_\nu(U^\Phi_\Psi))=D_\nu (D_\mu(U^\Phi_\Psi))\nonumber
\end{eqnarray}
for arbitrary tensor fields with values in the Clifford algebra $U^\Phi_{\Psi}, V^\Phi_\Psi\in\cl_{p,q}\T^{|\Phi|}_{|\Psi|}$ and $W^\Omega_\Sigma\in\cl_{p,q}\T^{|\Omega|}_{|\Sigma|}$.
\end{thm}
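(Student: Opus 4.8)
The plan is to verify each of the four identities directly from the definition (\ref{covder}) of $D_\mu$, exploiting the fact that $\partial_\mu$ is an $\R$-linear derivation and that the commutator $[C_\mu,\,\cdot\,]$ is $\R$-linear in its second argument and acts as a derivation with respect to the Clifford product.

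The first three identities are straightforward. Additivity follows by expanding $D_\mu(U^\Phi_\Psi+V^\Phi_\Psi)=\partial_\mu(U^\Phi_\Psi+V^\Phi_\Psi)-[C_\mu,U^\Phi_\Psi+V^\Phi_\Psi]$ and using the additivity of $\partial_\mu$ together with the bilinearity of the commutator. Homogeneity with a constant $\lambda\in\R$ holds because $\partial_\mu(\lambda U^\Phi_\Psi)=\lambda\,\partial_\mu U^\Phi_\Psi$ and $[C_\mu,\lambda U^\Phi_\Psi]=\lambda[C_\mu,U^\Phi_\Psi]$. For the Leibniz rule I would first record the derivation property of the commutator, $[C_\mu,U^\Phi_\Psi W^\Omega_\Sigma]=[C_\mu,U^\Phi_\Psi]W^\Omega_\Sigma+U^\Phi_\Psi[C_\mu,W^\Omega_\Sigma]$, and combine it with the ordinary Leibniz rule for $\partial_\mu$; regrouping the four resulting terms yields $D_\mu(U^\Phi_\Psi W^\Omega_\Sigma)=(D_\mu U^\Phi_\Psi)W^\Omega_\Sigma+U^\Phi_\Psi(D_\mu W^\Omega_\Sigma)$.

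The last identity, commutativity of the covariant derivatives, is where the geometry enters and is the main obstacle. I would expand $D_\mu(D_\nu U^\Phi_\Psi)$ in full, obtaining a second-order partial $\partial_\mu\partial_\nu U^\Phi_\Psi$, terms of the form $[\partial_\mu C_\nu,U^\Phi_\Psi]$ and $[C_\nu,\partial_\mu U^\Phi_\Psi]$ (here using $\partial_\mu[C_\nu,U^\Phi_\Psi]=[\partial_\mu C_\nu,U^\Phi_\Psi]+[C_\nu,\partial_\mu U^\Phi_\Psi]$), together with the nested commutator $[C_\mu,[C_\nu,U^\Phi_\Psi]]$. Subtracting the same expression with $\mu$ and $\nu$ interchanged, the mixed partials cancel since $\partial_\mu\partial_\nu=\partial_\nu\partial_\mu$, while the two terms linear in the first derivatives of $U^\Phi_\Psi$ coincide in both expansions and therefore cancel. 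The difference of the nested commutators I would reorganize by the Jacobi identity into $[[C_\mu,C_\nu],U^\Phi_\Psi]$, so that the entire difference collapses to the single commutator $-[\partial_\mu C_\nu-\partial_\nu C_\mu-[C_\mu,C_\nu],\,U^\Phi_\Psi]$. By the zero-curvature condition (\ref{CC}) of Theorem \ref{th4}, the inner argument vanishes identically, and hence $D_\mu(D_\nu U^\Phi_\Psi)=D_\nu(D_\mu U^\Phi_\Psi)$. The crux is thus recognizing that the commutator of covariant derivatives equals the commutator of $U^\Phi_\Psi$ with the curvature, which is zero by Theorem \ref{th4}.
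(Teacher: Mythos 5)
Your proof is correct and takes essentially the same approach as the paper: the paper's (very terse) proof states that the properties follow from definition (\ref{covder}), with the zero-curvature condition (\ref{CC}) needed for the last one, and your expansion of $D_\mu(D_\nu U^\Phi_\Psi)-D_\nu(D_\mu U^\Phi_\Psi)$ via the Jacobi identity into $-[\partial_\mu C_\nu-\partial_\nu C_\mu-[C_\mu,C_\nu],U^\Phi_\Psi]$ is exactly that argument carried out in detail. (The paper also notes a second, even shorter route via (\ref{covder2}), in which $D_\mu$ differentiates only the scalar coefficients in the $h^B$ basis so all four properties reduce to properties of $\partial_\mu$, but the route you chose is one of the two the paper endorses.)
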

\begin{proof} We can easily obtain these properties of the operation $D_\mu$ in two different ways: using (\ref{covder}) or using (\ref{covder2}). If we use definition (\ref{covder}), then we need also (\ref{CC}) to prove the last property.
\end{proof}

\begin{thm}\label{th9} We have
\begin{eqnarray}
&&\partial_\lambda[C_\mu, C_\nu]+\partial_\mu[C_\nu, C_\lambda]+\partial_\nu[C_\lambda, C_\mu]=0,\nonumber\\
&&D_\lambda[C_\mu, C_\nu]+D_\mu[C_\nu, C_\lambda]+D_\nu[C_\lambda, C_\mu]=0,\nonumber\\
&&D_\mu (C_\rho)=\partial_\rho C_\mu.\nonumber
\end{eqnarray}
\end{thm}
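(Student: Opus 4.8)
The plan is to dispatch the three identities one at a time, drawing only on the zero-curvature relation (\ref{CC}) of Theorem \ref{th4}, the definition (\ref{covder}) of the covariant derivative, and the Lie-algebra axioms of $\cl^\circledS_{p,q}$; no new analytic input is required.

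I would begin with the third identity $D_\mu(C_\rho)=\partial_\rho C_\mu$, since it is the most immediate. Treating $C_\rho\in\cl^\circledS_{p,q}\T_1$ as a Clifford-algebra-valued field in (\ref{covder}) gives $D_\mu(C_\rho)=\partial_\mu C_\rho-[C_\mu,C_\rho]$; substituting $[C_\mu,C_\rho]=\partial_\mu C_\rho-\partial_\rho C_\mu$ from (\ref{CC}) collapses this at once to $\partial_\rho C_\mu$. Note that (\ref{covder}) carries no Christoffel-type term for the Greek index, which is exactly what makes this work in the flat Cartesian setting.

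Next I would prove the first, purely differential, identity. Writing $[C_\mu,C_\nu]=\partial_\mu C_\nu-\partial_\nu C_\mu$ by (\ref{CC}), the cyclic sum takes the form
\begin{eqnarray}
&&\partial_\lambda(\partial_\mu C_\nu-\partial_\nu C_\mu)+\partial_\mu(\partial_\nu C_\lambda-\partial_\lambda C_\nu)+\partial_\nu(\partial_\lambda C_\mu-\partial_\mu C_\lambda)=0,\nonumber
\end{eqnarray}
in which the six second-order terms cancel in pairs because mixed partial derivatives commute, $\partial_\alpha\partial_\beta=\partial_\beta\partial_\alpha$. This is the familiar Bianchi identity for a flat connection.

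Finally, for the second identity I would expand each term by (\ref{covder}) as $D_\lambda[C_\mu,C_\nu]=\partial_\lambda[C_\mu,C_\nu]-[C_\lambda,[C_\mu,C_\nu]]$ and sum cyclically. The three $\partial$-terms vanish by the first identity just proved, while the three nested commutators $[C_\lambda,[C_\mu,C_\nu]]+[C_\mu,[C_\nu,C_\lambda]]+[C_\nu,[C_\lambda,C_\mu]]$ vanish by the Jacobi identity in $\cl^\circledS_{p,q}$. The only thing to watch is the bookkeeping — recognizing that the differential part of the cyclic sum is precisely the first identity and that the commutator part is precisely Jacobi — so I do not anticipate any genuine obstacle once (\ref{CC}) is in hand.
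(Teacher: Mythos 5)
Your proof is correct. The third identity is handled exactly as in the paper: $D_\mu(C_\rho)=\partial_\mu C_\rho-[C_\mu,C_\rho]=\partial_\rho C_\mu$ by (\ref{CC}), and the first identity (substitute (\ref{CC}) and cancel mixed partials) is also the natural reading of the paper's terse argument. The genuine difference is in the second identity. The paper derives it from (\ref{CC}) together with the commutativity of covariant derivatives, $D_\mu(D_\nu(U^\Phi_\Psi))=D_\nu(D_\mu(U^\Phi_\Psi))$, established as the last property in Theorem \ref{th8}: writing $[C_\mu,C_\nu]=D_\nu(C_\mu)-D_\mu(C_\nu)$ via the third identity, the cyclic sum of $D_\lambda[C_\mu,C_\nu]$ regroups into $(D_\lambda D_\nu-D_\nu D_\lambda)C_\mu+(D_\mu D_\lambda-D_\lambda D_\mu)C_\nu+(D_\nu D_\mu-D_\mu D_\nu)C_\lambda$, which vanishes. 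You instead expand $D_\lambda[C_\mu,C_\nu]=\partial_\lambda[C_\mu,C_\nu]-[C_\lambda,[C_\mu,C_\nu]]$ and kill the two pieces separately: the derivative part by your first identity, the commutator part by the Jacobi identity in $\cl^\circledS_{p,q}$. The ultimate ingredients are the same, since the commutativity of the $D_\mu$'s is itself proved from (\ref{CC}) and Jacobi; what your route buys is self-containedness (you never invoke Theorem \ref{th8}), while the paper's route is shorter because it reuses that lemma. Either argument is acceptable.
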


\begin{proof} The first two properties follow from (\ref{CC}) and
$$D_\mu (D_\nu(U^\Phi_\Psi))=D_\nu (D_\mu(U^\Phi_\Psi)).$$
Using (\ref{CC}), we get
$$D_\mu (C_\rho)=\partial_\mu C_\rho-[C_\mu, C_\rho]=\partial_\rho C_\mu.$$
The theorem is proved.
\end{proof}

Note that covariant and partial derivatives do not commute:
$$\partial_\nu D_\mu U^\Phi_\Psi=D_\mu \partial_\nu U^\Phi_\Psi+[U^\Phi_\Psi,\partial_\nu C_\mu],\qquad U^\Phi_\Psi\in\cl_{p,q}\T^r_s.$$

\section{Class of covariantly constant solutions of the Yang-Mills equations}\label{sec4}

Let $\G$ be a semisimple Lie group and $\frak{g}$ be the real Lie algebra of the Lie group $\G$. Multiplication of elements of $\frak{g}$ is given by the Lie bracket  $[U, V]=-[V, U]$. By $\frak{g}\T^r_s$ we denote a set of tensor fields of the pseudo-Euclidean space $\R^{k,l}$, $k+l=m$, of type $(r,s)$ and of rank $r+s$ with values in the Lie algebra $\frak{g}$.

Consider the following equations in the pseudo-Euclidean space $\R^{k,l}$:
\begin{eqnarray}
&&\partial_\mu B_\nu-\partial_\nu B_\mu-[B_\mu, B_\nu]=F_{\mu\nu},\qquad \mu, \nu=1, \ldots, m,\label{YM}\\
&&\partial_\mu F^{\mu\nu}-[B_\mu,F^{\mu\nu}]=J^\nu,\qquad \nu=1, \ldots, m,\nonumber
\end{eqnarray}
where  $B_\mu\in \frak{g}\T_1$, $J^\nu\in\frak{g}\T^1$, $F_{\mu\nu}=-F_{\nu\mu}\in \frak{g}\T_{2}$. These equations are called {\em the Yang-Mills equations} (system  of Yang-Mills equations).
One suggests that $B_\mu, F_{\mu\nu}$ are unknown and $J^\nu$ is known vector with values in the Lie algebra $\frak{g}$. One says that equations (\ref{YM}) define {\em the Yang-Mills field} $(B_\mu, F_{\mu\nu})$, where $B_\mu$ is {\em the potential} and $F_{\mu\nu}$ is {\em the strength} of the Yang-Mills field.
A vector $J^\nu$ is called {\em a non-Abelian current} (in the case of Abelian group $\G$ vector  $J^\nu$ is called {\em a current}).

Consider $B_\mu, F_{\mu\nu}, J^\nu$ that satisfy (\ref{YM}). Let us take a scalar field $S=S(x)$ with values in the Lie group $\G$ and consider transformed tensor fields
\begin{eqnarray}
\acute B_\mu &=& S^{-1}B_\mu S-S^{-1}\partial_\mu S,\nonumber\\
\acute F_{\mu\nu} &=& S^{-1}F_{\mu\nu}S,\label{gauge:tr}\\
\acute J^{\nu} &=& S^{-1}J^{\nu}S.\nonumber
\end{eqnarray}
These tensor fields satisfy the same Yang-Mills equations
\begin{eqnarray*}
&&\partial_\mu\acute B_\nu-\partial_\nu \acute B_\mu-[\acute B_\mu,\acute B_\nu]=\acute F_{\mu\nu},\qquad \mu, \nu=1, \ldots, m,\\
&&\partial_\mu\acute F^{\mu\nu}-[\acute B_\mu,\acute F^{\mu\nu}]=\acute J^\nu,\qquad \nu=1, \ldots, m,
\end{eqnarray*}
i.e., equations (\ref{YM}) are invariant w.r.t. the transformations (\ref{gauge:tr}). The transformation (\ref{gauge:tr}) is called {\em a gauge transformation} (or {\em a  gauge symmetry}), and the Lie group $\G$ is called {\em the gauge group} of the Yang-Mills equations (\ref{YM}).

Let $B_\mu\in \frak{g}\T_1$ be an arbitrary covector with values in $\frak{g}$, which smoothly depends on $x\in\R^{p,q}$. By $F_{\mu\nu}$ denote the expression
\begin{equation}
F_{\mu\nu}:=\partial_\mu B_\nu-\partial_\nu B_\mu-[B_\mu,B_\nu]\label{Fmunu}
\end{equation}
and by $J^\nu$ denote the expression
$$
J^\nu:=\partial_\mu F^{\mu\nu}-[B_\mu,F^{\mu\nu}].
$$
Now we can consider the expression  $\partial_\nu J^\nu-[B_\nu,J^\nu]$ and, with the aid of simple calculations, we may verify that
\begin{equation}
\partial_\nu J^\nu-[B_\nu,J^\nu]=0.\label{nonabel:conslaw}
\end{equation}
This identity is called {\em a non-Abelian conservation law} (in the case of an Abelian Lie group $\G$ we have $\partial_\nu J^\nu=0$, i.e., the divergence of the vector $J^\nu$ equals zero). Therefore the non-Abelian conservation law (\ref{nonabel:conslaw}) is a consequence of the Yang-Mills equations (\ref{YM}).

In particular, there is a trivial solution of the Yang-Mills equations $B_\mu=0$, $F_{\mu\nu}=0$ for $J^\nu= 0$. If $U=U(x)$ is a scalar field with values in $\G$, then we can get another (vacuum) solution of the Yang-Mills equations using gauge transformation (\ref{gauge:tr}):
$$
B_\mu=-U^{-1}\partial_\mu U,\quad F_{\mu\nu}=0,\quad J^\nu=0.
$$

During the last 60 years several classes of solutions of the Yang-Mills equations were discovered. Namely, monopoles (Wu, Yang, 1968 \cite{WYa}), instantons (Belavin, Polyakov, Schwartz, Tyupkin, 1975 \cite{Bel}), merons (de Alfaro, Fubini, Furlan, 1976 \cite{deA}) and so on (see review of Actor, 1979 \cite{Actor} and review of Zhdanov and Lagno, 2001 \cite{Zhdanov}). Constant solutions of the Yang-Mills equations with zero current are discussed in \cite{Sch} and \cite{Sch2}. 

Let us consider the system of the Yang-Mills equations (\ref{YM}) in the Lie algebra $\frak{g}=\cl^\circledS_{p,q}$,
i.e. $B_\mu\in\cl^\circledS_{p,q}\T_1$, $F_{\mu\nu}\in\cl^\circledS_{p,q}\T_2$, $J^\nu\in\cl^\circledS_{p,q}\T^1$.

\begin{thm}\label{CCS} If the covariantly constant tensor field with values in the Clifford algebra $K_\mu\in \M\cl_{p,q}\T_1$ is a solution of the following system of algebraic equations
\begin{eqnarray}
[K_\mu,[K^\mu, K^\nu]]=J^\nu,\qquad \nu=1, \ldots, m,\label{algur}
\end{eqnarray}
for some $J^\mu\in\M\cl_{p,q}\T^1$, then the tensor field
\begin{eqnarray}
B_\mu(x)=C_\mu(x)+K_\mu(x),\qquad \mu=1, \ldots, m,\label{razlo}
\end{eqnarray}
is a solution of the Yang-Mills equations
\begin{eqnarray}
&&\partial_\mu B_\nu-\partial_\nu B_\mu-[B_\mu, B_\nu]=F_{\mu\nu},\qquad \mu, \nu=1, \ldots, m, \label{YM2}\\
&&\partial_\mu F^{\mu\nu}-[B_\mu,F^{\mu\nu}]=J^\nu, \qquad \nu=1, \ldots, m,\nonumber
\end{eqnarray}
in the Lie algebra $\cl^\circledS_{p,q}$, where $C_\mu\in\cl^\circledS_{p,q}\T_1$ is a unique solution of
$$\partial_\mu h^a-[C_\mu, h^a]=0,\qquad \mu=1, \ldots, m,\qquad a=1, \ldots, n.$$
\end{thm}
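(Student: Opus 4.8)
The plan is to substitute the proposed potential $B_\mu = C_\mu + K_\mu$ directly into the defining expression (\ref{Fmunu}) for the strength and simplify, exploiting three facts: the zero-curvature identity (\ref{CC}) for $C_\mu$ furnished by Theorem \ref{th4}; the covariant constancy of $K_\mu$, which by the definition (\ref{covder}) reads $\partial_\mu K_\nu = [C_\mu, K_\nu]$; and the Leibniz property of $D_\mu$ established in Theorem \ref{th8}.

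First I would expand
$$F_{\mu\nu} = \partial_\mu(C_\nu + K_\nu) - \partial_\nu(C_\mu + K_\mu) - [C_\mu + K_\mu, C_\nu + K_\nu].$$
The purely-$C$ block $\partial_\mu C_\nu - \partial_\nu C_\mu - [C_\mu, C_\nu]$ vanishes by (\ref{CC}). Replacing $\partial_\mu K_\nu$ by $[C_\mu, K_\nu]$ and $\partial_\nu K_\mu$ by $[C_\nu, K_\mu]$, all four mixed terms $[C_\mu, K_\nu]$, $[C_\nu, K_\mu]$, $[C_\mu, K_\nu]$, $[K_\mu, C_\nu]$ cancel in pairs, leaving the crucial simplification
$$F_{\mu\nu} = -[K_\mu, K_\nu].$$
Thus the strength is built entirely from the covariantly constant part, which is precisely what makes the construction work.

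Next I would check that $F_{\mu\nu}$ is itself covariantly constant. Since $D_\lambda$ is linear and obeys the Leibniz rule of Theorem \ref{th8}, it annihilates the commutator of two covariantly constant fields: $D_\lambda F_{\mu\nu} = -[D_\lambda K_\mu, K_\nu] - [K_\mu, D_\lambda K_\nu] = 0$, equivalently $\partial_\mu F^{\mu\nu} = [C_\mu, F^{\mu\nu}]$. Substituting $B_\mu = C_\mu + K_\mu$ into the left-hand side of the current equation then yields
$$\partial_\mu F^{\mu\nu} - [B_\mu, F^{\mu\nu}] = [C_\mu, F^{\mu\nu}] - [C_\mu, F^{\mu\nu}] - [K_\mu, F^{\mu\nu}] = -[K_\mu, F^{\mu\nu}].$$
Finally, inserting $F^{\mu\nu} = -[K^\mu, K^\nu]$ (obtained by raising indices in the formula for $F_{\mu\nu}$) turns the right-hand side into $[K_\mu, [K^\mu, K^\nu]]$, which equals $J^\nu$ by the algebraic hypothesis (\ref{algur}). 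That the resulting $J^\nu$ is covariantly constant, and that $B_\mu, F_{\mu\nu}, J^\nu$ all remain inside the Lie algebra $\cl^\circledS_{p,q}$ (commutators carry no central component), should be recorded but is immediate.

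I expect the only delicate points to be the bookkeeping of commutator signs in the pairwise cancellation and the index raising in passing from $F_{\mu\nu} = -[K_\mu, K_\nu]$ to $F^{\mu\nu} = -[K^\mu, K^\nu]$, together with confirming that each replacement of a partial derivative by the covariant bracket $[C_\cdot,\,\cdot\,]$ is licensed by covariant constancy. There is no analytic obstacle here; the entire content lies in the decomposition $B_\mu = C_\mu + K_\mu$ into a zero-curvature piece and a covariantly constant piece, after which the verification is purely algebraic.
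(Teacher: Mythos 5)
Your proposal is correct and follows essentially the same route as the paper: substitute $B_\mu=C_\mu+K_\mu$, use the zero-curvature identity (\ref{CC}) and covariant constancy of $K_\mu$ to reduce the strength to $F_{\mu\nu}=-[K_\mu,K_\nu]$, then use that the commutator of covariantly constant fields is covariantly constant (your Leibniz-rule argument is exactly how the paper justifies $[K^\mu,K^\nu]\in\M\cl_{p,q}\T^2$) to reduce the current equation to the algebraic hypothesis (\ref{algur}). No gaps; the sign bookkeeping and index raising you flag indeed go through since the metric $\rho$ is constant.
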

\begin{proof}
Let us substitute (\ref{razlo}) into the first equation (\ref{YM2}). We have
\begin{eqnarray}
&&\partial_\mu (C_\nu+K_\nu)-\partial_\nu (C_\mu+K_\mu)-[C_\mu+K_\mu, C_\nu+K_\nu]=F_{\mu\nu},\nonumber\\
&&(\partial_\mu C_\nu-\partial_\nu C_\mu-[C_\mu, C_\nu])+(\partial_\mu K_\nu-[C_\mu, K_\nu])\nonumber\\
&&-(\partial_\nu K_\mu-[C_\nu, K_\mu])-[K_\mu, K_\nu]=F_{\mu\nu}.\nonumber
\end{eqnarray}
All summands, except the last one $[K_\mu, K_\nu]$, equal zero. Using (\ref{CC}) and $K_\nu\in\M\cl_{p,q}\T_1$, we get
$$-[K_\mu, K_\nu]=F_{\mu\nu}.$$
Let us substitute this expression and (\ref{razlo}) into the second equation (\ref{YM2}). We have
\begin{eqnarray}
&&-\partial_\mu [K^\mu, K^\nu]+[C_\mu+K_\mu,[K^\mu, K^\nu]]=J^\nu,\nonumber\\
&&[K_\mu, [K^\mu, K^\nu]]-(\partial_\mu [K^\mu, K^\nu]-[C_\mu, [K^\mu, K^\nu]])=J^\nu.\nonumber
\end{eqnarray}
The expression in round brackets equals zero because $[K^\mu, K^\nu]\in \M\cl_{p,q}\T^2$ and we conclude that $K_\mu$ must satisfy the following equation
$$[K_\mu, [K^\mu, K^\nu]]=J^\nu.$$
The theorem is proved.
\end{proof}

We call the solutions of the Yang-Mills equations from Theorem \ref{CCS} \emph{covariantly constant solutions} because expressions $K_\mu=K_\mu(x)$ are covariantly constant tensor fields with values in Clifford algebra.

In Section \ref{sec5}, we discuss another statements (including the main result of the paper \cite{marchuk1}) which are particular cases of Theorem \ref{CCS}.

\section{Statements in the algebra of $h$-forms}\label{sec5}

In the previous sections of this paper, we consider Clifford algebra $\cl_{p,q}$, $p+q=n$ with matrix $\eta$ (\ref{eta}) and pseudo-Euclidean space $\R^{k,l}$, $k+l=m$ with matrix $\rho$ (\ref{rho}).

Now let us consider the particular case $k=p$, $l=q$. We have $m=n=p+q$ and
$$\eta=\rho=\diag(\underbrace{1, \ldots, 1}_{p}, \underbrace{-1, \ldots, -1}_{q}).$$
We consider a vector field with values in the Clifford algebra $h^\mu=h^\mu(x):\R^{p,q}\to\cl_{p,q}$ (we write $h^\mu\in\cl_{p,q}\T^1$)
\begin{eqnarray}
h^\mu(x)=y^\mu(x) e +y^\mu_a(x) e^a+y^\mu_{ab}(x)e^b+\cdots+ y^\mu_{1\ldots n}(x)e^{1\ldots n}=y^\mu_{A}e^A,\label{hmue}
\end{eqnarray}
which satisfies
\begin{eqnarray}
h^\mu(x) h^\nu(x)+h^\nu(x) h^\mu(x)=2\eta^{\mu\nu}e,\qquad \forall x\in\R^{p,q}.\label{antikom}
\end{eqnarray}
In the case of odd $n$, we also require additional condition
$$\Tr(h^1(x) \ldots h^n(x))=0\qquad \mbox{(in the case of odd $n$)}$$
to obtain independent elements $h^{\mu_1 \ldots \mu_k}$.
The expression $h^\mu$ is called \emph{Clifford field vector} (see \cite{marchuk1}, \cite{ofe}).
The following expression
$$U=ue+u_{\omega_1} h^{\omega_1}+u_{\omega_1 \omega_2}h^{\omega_1 \omega_2}+\cdots+ u_{1 \ldots n}h^{1\ldots n}=u_{\Omega}h^\Omega,$$
where $u_\Omega=u_{\omega_1 \ldots \omega_j}$ are skewsymmetric tensor fields of rank $j$, is called \emph{$h$-form}. The set of such $h$-forms is \emph{an algebra of $h$-forms} $\cl[h]_{p,q}$. It is a generalization of Atiyh-K\"ahler algebra \cite{graf}, \cite{salingaros}, \cite{Marchukeng}, where we have differentials $dx^\mu$ instead of $h^\mu$. The set $h^\mu$, $\mu=1, \ldots, n=p+q$ generate a basis of $\cl[h]_{p,q}$:
\begin{eqnarray}
\{h^\Omega, |\Omega|=0, 1, \ldots, n\}=\{e, h^{\omega_1}, h^{\omega_1\omega_2}, \ldots, h^{1\ldots n}\}.\label{hfo}
\end{eqnarray}
Consider an arbitrary tensor field with values in the algebra of $h$-forms $U^\Phi_\Psi=U^{\phi_1 \ldots \phi_r}_{\psi_1 \ldots \psi_s}(x):\R^{p,q}\to\cl[h]_{p,q}$:
$$U^\Phi_\Psi=U^{\phi_1 \ldots \phi_r}_{\psi_1 \ldots \psi_s}=u^{\phi_1 \ldots \phi_r}_{\psi_1 \ldots \psi_s} e+u^{\phi_1 \ldots \phi_r}_{\psi_1 \ldots \psi_s \omega_1} h^{\omega_1}+ u^{\phi_1 \ldots \phi_r}_{\psi_1 \ldots \psi_s\omega_1 \omega_2}h^{\omega_1 \omega_2}+\cdots$$
$$\cdots+ u^{\phi_1 \ldots \psi_r}_{\psi_1 \ldots \psi_s 1 \ldots n}h^{1\ldots n}=u^{\Phi}_{\Psi \Omega}(x)h^\Omega(x)\in\cl[h]_{p,q}\T^r_s.$$
Note that using (\ref{hmue}), we get $h^\Omega(x)=y^\Omega_A(x) e^A$ for some $y^\Omega_A=y^\Omega_A(x)$ and
$$U^\Phi_\Psi=u^{\Phi}_{\Psi \Omega}h^\Omega=u^{\Phi}_{\Psi \Omega}y^\Omega_A e^A=u^{\Phi}_{\Psi A} e^A\in\cl_{p,q}\T^r_s,\quad u^\Phi_{\Psi A}(x)= u^{\Phi}_{\Psi \Omega}(x)y^\Omega_A(x).$$
Note that the Clifford field vector $h^\mu\in\cl_{p,q}\T^1$ can be regarded as a vector field with values in the algebra of $h$-forms because $h^\mu=\delta^\mu_\nu h^\nu\in\cl[h]_{p,q}\T^1$.

We have analogues of Theorems \ref{th1} - \ref{th4} (see \cite{ofe}), \ref{th5} - \ref{th6}, \ref{th7} - \ref{th9} not for elements $h^a$, but for elements $h^\mu$. Namely, from (\ref{antikom}) it follows that $h^\mu\in\cl^\circledS_{p,q}\T^1$. Instead of equation (\ref{prim}) we have the following equation in the Lie algebra $\cl[h]^\circledS_{p,q}=\cl[h]_{p,q}\setminus \Cen(\cl[h]_{p,q})$:
\begin{equation}
\partial_\mu h^\rho-[C_\mu, h^\rho]=0,\quad \mu,\rho=1,\ldots,n,\label{nik:eq}
\end{equation}
where $h^\rho\in\cl_{p,q}\T^1$ is an arbitrary Clifford field vector and $C_\mu=C_\mu(x)$ ($x\in\R^{p,q}$) is a covector field with values in $\cl_{p,q}$.
The components of the covector field $C_\mu$ satisfy
\begin{equation}
\partial_\mu C_\nu-\partial_\nu C_\mu-[C_\mu,C_\nu]=0,
\qquad\mu,\nu=1,\ldots n.
\end{equation}
The equation (\ref{nik:eq}) is gauge invariant. Let $h^\nu\in\cl^\circledS_{p,q}\T^1$ be a Clifford field vector and $C_\mu\in\cl^\circledS_{p,q}\T_1$ satisfy (\ref{nik:eq}). Let $S : \R^{p,q}\to\cl^\times_{p,q}$ be a function with values in $\cl^\times_{p,q}$ such that
$$
S^{-1}\partial_\mu S\in\cl^\circledS_{p,q}\T_1.
$$
Then the following expressions
$$
\acute h^\rho=S^{-1}h^\rho S\in\cl^\circledS_{p,q}\T_1,\quad
\acute C_\mu=S^{-1}C_\mu S-S^{-1}\partial_\mu S\in\cl^\circledS_{p,q}\T_1
$$
also satisfy the equation
$$
\partial_\mu \acute h^\rho-[\acute C_\mu,\acute h^\rho]=0,\qquad \mu,\rho=1,\ldots,n.
$$
We have a unique solution $C_\mu\in\cl^\circledS_{p,q}\T_1$ of the equation (\ref{nik:eq}):
\begin{equation}
\partial_\mu h^\rho-[C_\mu, h^\rho]=0 \quad\Leftrightarrow\quad C_\mu=\sum_{j=1}^{2[\frac{n}{2}]} \mu_j \pi[h]_j ((\partial_\mu h^\rho) h_\rho),\label{resh}
\end{equation}
where $\mu_j=(n-(-1)^j(n-2j))^{-1}$ and
$$\pi[h]_j:\cl[h]_{p,q}\to\cl[h]^j_{p,q}=\{u_\Omega h^\Omega, |\Omega|=j\}$$
are projection operators. The solution (\ref{resh}) can be also represented in the following form (the proof is similar to the proof of Theorem \ref{th6})
\begin{eqnarray}
C_\mu=\frac{1}{2^n}(\partial_\mu h^\Omega)h_\Omega,\qquad \mu=1, \ldots, n.\label{cmuomega}
\end{eqnarray}
From equation (\ref{nik:eq}) it follows that (the proof is similar to the proof of Theorem \ref{th5})
\begin{eqnarray}
\partial_\mu(h^{\nu_1}\cdots h^{\nu_k})-[C_\mu, h^{\nu_1}\cdots h^{\nu_k}]=0.
\end{eqnarray}
We can consider the operation of covariant differentiation of an arbitrary tensor field $U^{\phi_1 \ldots \phi_r}_{\psi_1 \ldots \psi_s}\in\cl[h]_{p,q}\T^r_s$ with values in the algebra of $h$-forms
$$D_\mu(U^{\phi_1 \ldots \phi_r}_{\psi_1 \ldots \psi_s})=\partial_\mu(U^{\phi_1 \ldots \phi_r}_{\psi_1 \ldots \psi_s})-[C_\mu, U^{\phi_1 \ldots \phi_r}_{\psi_1 \ldots \psi_s}].$$
We conclude that covariant derivative acts as partial derivative acts only on the coefficients before $h^{\omega_1 \ldots \omega_j}$ (the proof is similar to the proof of Theorem \ref{th7}):
\begin{eqnarray}
D_\mu(U^{\phi_1 \ldots \phi_r}_{\psi_1 \ldots \psi_s})=\partial_\mu(u^{\phi_1 \ldots \phi_r}_{\psi_1 \ldots \psi_s})e+\partial_\mu(u^{\phi_1 \ldots \phi_r}_{\psi_1 \ldots \psi_s\omega}) h^\omega+\cdots+ \partial_\mu(u^{\phi_1 \ldots \phi_r}_{\psi_1 \ldots \psi_s 1\ldots n})h^{1\ldots n}.\nonumber
\end{eqnarray}
Let us consider covariantly constant tensor fields with values in the algebra of $h$-forms:
\begin{eqnarray}
\M\cl[h]_{p,q} T^r_s=\{U^{\phi_1 \ldots \phi_r}_{\psi_1 \ldots \psi_s}\in\cl[h]_{p,q}\T^r_s: D_\mu(U^{\phi_1 \ldots \phi_r}_{\psi_1 \ldots \psi_s})=0\}.
\end{eqnarray}
Elements of this set have the form $U^\Phi_\Psi(x)=u^\Phi_{\Psi \Omega}h^\Omega(x)$, where all $u^{\Phi}_{\Psi \Omega}\in\R$ do not depend on $x\in\R^{p,q}$.

\begin{thm}\label{CCS2} If the covariantly constant tensor field with values in the algebra of $h$-forms $K_\mu\in \M\cl[h]_{p,q}\T_1$ is a solution of the following system of algebraic equations
\begin{eqnarray}
[K_\mu,[K^\mu, K^\nu]]=J^\nu,\qquad \mu=1, \ldots, n,\label{algur2}
\end{eqnarray}
for some $J^\mu\in\M\cl[h]_{p,q}\T^1$, then the tensor field
\begin{eqnarray}
B_\mu(x)=C_\mu(x)+K_\mu(x),\qquad \mu=1, \ldots, n\label{razlo2}
\end{eqnarray}
is a solution of the Yang-Mills equations
\begin{eqnarray}
&&\partial_\mu B_\nu-\partial_\nu B_\mu-[B_\mu, B_\nu]=F_{\mu\nu},\qquad \mu, \nu=1, \ldots, n,\label{YM22}\\
&&\partial_\mu F^{\mu\nu}-[B_\mu,F^{\mu\nu}]=J^\nu,\qquad \nu=1, \ldots, n\nonumber
\end{eqnarray}
in the Lie algebra $\cl[h]^\circledS_{p,q}$, where $C_\mu\in\cl[h]^\circledS_{p,q}$ is a unique solution of
\begin{eqnarray}
\partial_\mu h^\nu-[C_\mu, h^\nu]=0,\qquad \mu, \nu=1, \ldots, n.\label{tr6}
\end{eqnarray}
\end{thm}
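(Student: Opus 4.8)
The plan is to follow verbatim the argument of Theorem~\ref{CCS}, replacing the Clifford-algebra tensor fields $\cl^\circledS_{p,q}\T^r_s$ by the corresponding tensor fields $\cl[h]^\circledS_{p,q}\T^r_s$ with values in the algebra of $h$-forms, and the elements $h^a$ by the Clifford field vectors $h^\mu$. First I would substitute the ansatz (\ref{razlo2}) into the first Yang-Mills equation (\ref{YM22}) and expand the bracket $[C_\mu+K_\mu,C_\nu+K_\nu]$, collecting the resulting terms into three groups: the pure $C$-curvature $\partial_\mu C_\nu-\partial_\nu C_\mu-[C_\mu,C_\nu]$, the mixed terms $\partial_\mu K_\nu-[C_\mu,K_\nu]$ and $\partial_\nu K_\mu-[C_\nu,K_\mu]$, and the single remaining term $-[K_\mu,K_\nu]$.

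Next I would argue that the first two groups vanish. The $C$-curvature is zero because a solution $C_\mu$ of (\ref{tr6}) satisfies the zero-curvature condition $\partial_\mu C_\nu-\partial_\nu C_\mu-[C_\mu,C_\nu]=0$ recorded for the $h$-form setting in Section~\ref{sec5} (the analogue of Theorem~\ref{th4}). The mixed terms are precisely the covariant derivatives $D_\mu K_\nu$ and $D_\nu K_\mu$, which vanish because $K_\mu\in\M\cl[h]_{p,q}\T_1$ is covariantly constant. This leaves $F_{\mu\nu}=-[K_\mu,K_\nu]$.

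Then I would substitute this expression for $F_{\mu\nu}$ together with (\ref{razlo2}) into the second Yang-Mills equation. The key observation is that the commutator $[K^\mu,K^\nu]$ is again covariantly constant: by the Leibniz property of $D_\mu$ (the $h$-form analogue of Theorem~\ref{th8}) one has $D_\mu[K^\mu,K^\nu]=[D_\mu K^\mu,K^\nu]+[K^\mu,D_\mu K^\nu]=0$, that is, $\partial_\mu[K^\mu,K^\nu]-[C_\mu,[K^\mu,K^\nu]]=0$. Hence, when $\partial_\mu F^{\mu\nu}-[B_\mu,F^{\mu\nu}]$ is expanded, the terms $\partial_\mu[K^\mu,K^\nu]-[C_\mu,[K^\mu,K^\nu]]$ combine into $D_\mu[K^\mu,K^\nu]=0$, and what survives is precisely $[K_\mu,[K^\mu,K^\nu]]$. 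Setting this equal to $J^\nu$ reproduces the algebraic system (\ref{algur2}), so the ansatz (\ref{razlo2}) solves (\ref{YM22}) exactly when $K_\mu$ satisfies (\ref{algur2}).

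Since the computation is entirely parallel to Theorem~\ref{CCS}, the only point requiring care — the main obstacle — is to confirm that the two structural inputs genuinely transfer to the algebra of $h$-forms: the zero-curvature identity for the solution $C_\mu$ of (\ref{tr6}), and the derivation property of $D_\mu$ that keeps brackets of covariantly constant fields covariantly constant. Both are available in Section~\ref{sec5} as the $h$-form analogues of Theorems~\ref{th4} and~\ref{th8}, so no genuinely new estimate is needed; the proof is a transcription of the proof of Theorem~\ref{CCS} under $h^a\mapsto h^\mu$ and $\cl_{p,q}\T\mapsto\cl[h]_{p,q}$.
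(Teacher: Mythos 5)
Your proposal is correct and is essentially the paper's own proof: the paper proves Theorem~\ref{CCS2} by stating that the argument of Theorem~\ref{CCS} carries over verbatim to the $h$-form setting (with the unique solution $C_\mu$ of (\ref{tr6}) given by (\ref{resh}) or (\ref{cmuomega})), which is precisely the transcription you carry out. Your explicit Leibniz-rule justification that $[K^\mu,K^\nu]$ is covariantly constant is a small bonus of care beyond what the paper writes, but it is the same route, not a different one.
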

\begin{proof} The proof is similar to the proof of Theorem \ref{CCS}. The unique solution of the system (\ref{tr6}) is given by (\ref{resh}) or (\ref{cmuomega}).
\end{proof}

Theorem \ref{CCS2} can be regarded as a particular case of Theorem \ref{CCS} because we have $k=p$ and $l=q$ for $\R^{k,l}$ and $\cl_{p,q}$ in Theorem \ref{CCS2}.

In the particular case, elements $h^a$ from Sections \ref{sec1}-\ref{sec4} of this paper (see (\ref{he}), (\ref{hh}), (\ref{trh})) are connected with the generators $e^a$ as $h^a=y^a_b e^b$ using orthogonal matrix $Y=||y^a_b||\in\Or(p,q)$  (see (\ref{hy}) and (\ref{yy})) and elements $h^\mu$ are connected with $e^a$ as $h^\mu=y^\mu_a e^a$ using frame field $y^\mu_a$, $y^\mu_a y^\nu_b \eta^{ab}=\eta^{\mu\nu}$ (see \cite{marchuk1}). In this case, elements $h^a$ and $h^\mu$ are connected as $h^\mu=z^\mu_a h^a$ using frame field $z^\mu_a=y^\mu_b q^b_a$, $z^\mu_a z^\nu_b \eta^{ab}=\eta^{\mu\nu}$, where $Q=||q^b_a||=Y^{-1}$.

As a particular case of Theorem \ref{CCS2} we obtain the following theorem.

\begin{thm}\label{CCS3}
Let $h^\mu\in\cl[h]^\circledS_{p,q}\T^1$ be a Clifford field vector and $C_\mu\in\cl[h]^\circledS_{p,q}\T_1$ satisfies (\ref{nik:eq}). Then the following covector
\begin{equation}
B_\mu=\sigma h_\mu+C_\mu\in\cl^\circledS_{p,q}\T_1,\qquad \mu=1, \ldots, n\label{YM:var}
\end{equation}
is a solution of the following system of Yang-Mills equations:
\begin{eqnarray}
\partial_\mu B_\nu-\partial_\nu
B_\mu-[B_\mu,B_\nu] &=&F_{\mu\nu},\qquad \mu, \nu=1, \ldots, n,\label{YM:eq1}\\
\partial_\mu F^{\mu\nu}-[B_\mu,F^{\mu\nu}] &=& \varepsilon h^\nu,\qquad \nu=1, \ldots, n,\label{YM:eq2}
\end{eqnarray}
where constants $\sigma, \varepsilon\in\R$ are related by the formula
$$
\varepsilon = 4(n-1)\sigma^3.
$$
\end{thm}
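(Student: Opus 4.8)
The strategy is to recognize Theorem \ref{CCS3} as a special case of Theorem \ref{CCS2}: I set $K_\mu := \sigma h_\mu$ and verify that this choice meets the two hypotheses of Theorem \ref{CCS2}, namely that $K_\mu$ is covariantly constant and that it solves the algebraic system (\ref{algur2}) with the prescribed current. First I would check covariant constancy. Since $h^\mu$ satisfies (\ref{nik:eq}), lowering the index gives $\partial_\mu h_\nu - [C_\mu, h_\nu] = 0$, so $D_\mu(h_\nu) = 0$ and hence $D_\mu(\sigma h_\nu) = 0$; thus $K_\mu = \sigma h_\mu \in \M\cl[h]_{p,q}\T_1$, as required. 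This step is essentially immediate from the definition of $D_\mu$ and the defining equation for $C_\mu$.

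The computational heart is the algebraic identity (\ref{algur2}). I would compute $[K_\mu,[K^\mu,K^\nu]] = \sigma^3 [h_\mu,[h^\mu,h^\nu]]$ using only the anticommutation relations (\ref{antikom}), which read $h^\mu h^\nu + h^\nu h^\mu = 2\eta^{\mu\nu}e$. The plan is to expand the inner commutator as $[h^\mu,h^\nu] = h^\mu h^\nu - h^\nu h^\mu = 2 h^\mu h^\nu - 2\eta^{\mu\nu}e$, then form the outer commutator with $h_\mu$ and sum over $\mu$. Repeatedly applying (\ref{antikom}) to move factors past one another and collapse squared generators to scalars should reduce everything to a multiple of $h^\nu$. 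I expect the sum over $\mu$ to produce the combinatorial factor: contracting one index against the metric over $n$ values of $\mu$ is what generates the $(n-1)$ in the constant. Carrying out this Clifford-algebra bookkeeping carefully should yield
\begin{equation}
[h_\mu,[h^\mu,h^\nu]] = 4(n-1)\, h^\nu,\nonumber
\end{equation}
so that $[K_\mu,[K^\mu,K^\nu]] = 4(n-1)\sigma^3 h^\nu =: \varepsilon h^\nu$ with $\varepsilon = 4(n-1)\sigma^3$.

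With both hypotheses verified, Theorem \ref{CCS2} applies directly: the covector $B_\mu = C_\mu + K_\mu = \sigma h_\mu + C_\mu$ solves the Yang-Mills system (\ref{YM22}) with current $J^\nu = \varepsilon h^\nu$, which is exactly (\ref{YM:eq1})--(\ref{YM:eq2}). It remains only to note that $J^\nu = \varepsilon h^\nu \in \M\cl[h]_{p,q}\T^1$, which holds because $h^\nu$ is itself covariantly constant (again by (\ref{nik:eq})), so the hypothesis of Theorem \ref{CCS2} on the current is satisfied.

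I expect the main obstacle to be the purely computational one in the middle paragraph: correctly tracking signs and the index contraction when expanding the double commutator $[h_\mu,[h^\mu,h^\nu]]$ via (\ref{antikom}), since it is easy to mishandle the terms where $\mu = \nu$ versus $\mu \neq \nu$ and to lose the factor of $(n-1)$. Everything else is a direct appeal to the already-established Theorem \ref{CCS2}.
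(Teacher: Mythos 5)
Your proposal is correct and follows essentially the same route as the paper: both reduce the statement to Theorem \ref{CCS2} with $K_\mu=\sigma h_\mu$ and $J^\nu=\varepsilon h^\nu$, and both hinge on the identity $[h_\mu,[h^\mu,h^\nu]]=4(n-1)h^\nu$. The only cosmetic difference is that the paper evaluates this double commutator by citing the contraction formulas $h_\mu h^\mu=n$ and $h_\mu h^\nu h^\mu=(2-n)h^\nu$ from \cite{MSh}, whereas you propose deriving them directly from the anticommutation relations (\ref{antikom}), which is a routine computation; your explicit check that $\sigma h_\mu$ and $\varepsilon h^\nu$ are covariantly constant is a point the paper leaves implicit.
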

\begin{proof} We use Theorem \ref{CCS2} for the current $J^\mu=\varepsilon h^\nu$ and use the formulas
from \cite{MSh}
$$h_\mu h^\nu h^\mu=(2-n) h^\nu,\qquad h_\mu h^\mu=n.$$
We have
\begin{eqnarray}
&&[h_\mu,[h^\mu,h^\nu]]=h_\mu h^\mu h^\nu-h_\mu h^\nu h^\mu-h_\mu h^\nu h^\mu+h^\nu h^\mu h_\mu\nonumber\\
&&=n h^\nu-(2-n)h^\nu-(2-n)h^\nu+n h^\nu=4(n-1)h^\nu.\nonumber
\end{eqnarray}
The theorem is proved.
\end{proof}

As a particular case of Theorem \ref{CCS3} we obtain the main result of the paper \cite{marchuk1}. In the case $h^\mu\in\cl^1_{p,q}\T^1$, we have $h^\mu(x)=y^\mu_a(x) e^a$ for some frame field $y^\mu_a=y^\mu_a(x)$ and the spin connection equals $C_\mu=\frac{1}{4}(\partial_\mu h^\nu)h_\nu\in\cl^2_{p,q}\T^1$.

\begin{thm}\cite{marchuk1} \label{CCS4}
Let $h^\mu\in\cl^1_{p,q}\T^1$ be a Clifford field vector. Then the following covector
\begin{equation}
B_\mu=\sigma h_\mu+\frac{1}{4}(\partial_\mu h^\nu)h_\nu \in(\cl^1_{p,q}\oplus\cl^2_{p,q})\T_1,\qquad \mu=1, \ldots, n\label{YM:var}
\end{equation}
is a solution of the Yang-Mills equations
\begin{eqnarray}
\partial_\mu B_\nu-\partial_\nu
B_\mu-[B_\mu,B_\nu] &=&F_{\mu\nu},\qquad \mu, \nu=1, \ldots, n,\label{YM:eq1}\\
\partial_\mu F^{\mu\nu}-[B_\mu,F^{\mu\nu}] &=& \varepsilon h^\nu,\qquad \nu=1, \ldots, n,\label{YM:eq2}
\end{eqnarray}
where constants $\sigma, \varepsilon\in\R$ are related by the formula
$$
\varepsilon = 4(n-1)\sigma^3.
$$
\end{thm}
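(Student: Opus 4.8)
The plan is to obtain Theorem~\ref{CCS4} as the specialization of Theorem~\ref{CCS3} to grade-one Clifford field vectors $h^\mu\in\cl^1_{p,q}\T^1$. Theorem~\ref{CCS3} already asserts that, for any Clifford field vector $h^\mu$ satisfying (\ref{antikom}), the covector $B_\mu=\sigma h_\mu+C_\mu$ solves the Yang-Mills system (\ref{YM:eq1})--(\ref{YM:eq2}) with current $\varepsilon h^\nu$ and $\varepsilon=4(n-1)\sigma^3$, where $C_\mu\in\cl^\circledS_{p,q}\T_1$ is the unique solution of (\ref{nik:eq}). Hence the only new point is to identify this $C_\mu$ with the grade-two spin connection $\frac14(\partial_\mu h^\nu)h_\nu$ written in (\ref{YM:var}).

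First I would use the hypothesis $h^\mu\in\cl^1_{p,q}\T^1$ to write $h^\mu(x)=y^\mu_a(x)e^a$, so that each $h^\mu$ is of grade $1$ and $\partial_\mu h^\nu\in\cl^1_{p,q}\T_1$ as well. For such grade-one vectors the same computation that yields (\ref{spincon}) in the scalar case gives $(\partial_\mu h^\nu)h_\nu\in\cl^2_{p,q}\T_1$ and shows that the element $\frac14(\partial_\mu h^\nu)h_\nu$ satisfies $\partial_\mu h^\rho-[C_\mu,h^\rho]=0$. Since the solution of (\ref{nik:eq}) inside $\cl^\circledS_{p,q}\T_1$ is unique (the analogue of Theorem~\ref{th3} for $h^\mu$, recorded in Section~\ref{sec5}), this grade-two element must coincide with the $C_\mu$ furnished by the general formulas (\ref{resh}), (\ref{cmuomega}). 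Equivalently, all higher-grade contributions to the general-form spin connection (\ref{cmuomega}) collapse once every $h^\mu$ has grade $1$, leaving only the grade-two term. Consequently $B_\mu=\sigma h_\mu+C_\mu=\sigma h_\mu+\frac14(\partial_\mu h^\nu)h_\nu\in(\cl^1_{p,q}\oplus\cl^2_{p,q})\T_1$, which is exactly (\ref{YM:var}).

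With this identification in place the conclusion is immediate: Theorem~\ref{CCS3} supplies the complete Yang-Mills solution, and the cubic relation $\varepsilon=4(n-1)\sigma^3$ comes from the double-commutator identity $[h_\mu,[h^\mu,h^\nu]]=4(n-1)h^\nu$ evaluated there by means of $h_\mu h^\nu h^\mu=(2-n)h^\nu$ and $h_\mu h^\mu=n$. I do not anticipate a genuine obstacle, because every substantive ingredient---the zero-curvature identity (\ref{CC}), the uniqueness of $C_\mu$, and the double-commutator evaluation---is already established. The one step that merits care is confirming that the general-form spin connection of (\ref{cmuomega}) truly reduces to $\frac14(\partial_\mu h^\nu)h_\nu$ in the grade-one setting; this is precisely where the special structure $h^\mu\in\cl^1_{p,q}\T^1$ enters, and it follows as in the grade-one Example leading to (\ref{spincon}).
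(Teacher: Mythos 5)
Your proposal is correct and follows essentially the same route as the paper: Theorem~\ref{CCS4} is obtained there precisely as the specialization of Theorem~\ref{CCS3} to $h^\mu\in\cl^1_{p,q}\T^1$, where the spin connection of the general form reduces (by the grade-one computation behind (\ref{spincon}) and uniqueness of the solution of (\ref{nik:eq})) to $C_\mu=\frac{1}{4}(\partial_\mu h^\nu)h_\nu\in\cl^2_{p,q}\T_1$, giving (\ref{YM:var}) and the relation $\varepsilon=4(n-1)\sigma^3$.
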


Note that the statements of Theorems \ref{CCS3} and \ref{CCS4} can be used in the study of Yang-Mills-Proca equations \cite{YMP}.

\medskip

Note that all considerations of this paper can be reformulated for the case of the complexified Clifford algebra $\C\otimes\cl_{p,q}$ and the corresponding Lie algebra. We can also consider the Lie algebra $\C\otimes\cl^\circledS_{p,q}$. The constants $\varepsilon$ and $\sigma$ in Theorems \ref{CCS3} and \ref{CCS4} will be complex in this case.

We have the following well-known isomorphisms \cite{Lounesto}
\begin{eqnarray}
\C\otimes\cl_{p,q}\cong\left\lbrace
\begin{array}{ll}
\Mat(2^{\frac{n}{2}}, \C), & \parbox{.5\linewidth}{if $n$ is even,}\\
\Mat(2^{\frac{n-1}{2}}, \C)\oplus \Mat(2^{\frac{n-1}{2}}, \C), & \parbox{.5\linewidth}{if $n$ is odd.}
\end{array}\nonumber
\right.
\end{eqnarray}
In the case of even $n$, we have the Lie group isomorphism
\begin{eqnarray}
(\C\otimes\cl_{p,q})^\times\cong\GL(2^{\frac{n}{2}}, \C).
\end{eqnarray}
Taking into account $\U(2^{\frac{n}{2}})\subset \GL(2^{\frac{n}{2}}, \C)$ and using operation of Hermitian conjugation in Clifford algebra \cite{MSh}, we can reformulate theorems of the current paper with the use of unitary Lie groups (and unitary Lie algebras) of corresponding dimensions. We can also use another classical Lie groups and corresponding Lie algebras in the complexified Clifford algebra $\C\otimes\cl_{p,q}$ (see papers \cite{sh1}, \cite{sh2}, \cite{sh3}).

We discuss mathematical structures and constructions in this paper. Relating the proposed mathematical constructions to real word objects goes beyond the scope of this investigation. The application of the methods of this article to other nonlinear equations of mathematical physics is the subject for further research.

\subsection*{Acknowledgments}

The author is grateful to N. G. Marchuk for fruitful discussions
and participants of ICCA 11 Conference for useful comments. The author is grateful to anonymous reviewer for careful reading of the manuscript and helpful comments on how to improve the presentation.

\end{document}